\newcommand{\indep}{\mathop{\perp\!\!\!\!\perp}}
\newtheorem{thm}{Theorem}   
\title{Uncertainty Quantification in Heterogeneous Treatment Effect Estimation \\
with Gaussian-Process-Based Partially Linear Model}
\author{
    Shunsuke Horii\textsuperscript{\rm 1}, Yoichi Chikahara\textsuperscript{\rm 2}
}
\begin{document}

\maketitle

\begin{abstract}
Estimating heterogeneous treatment effects across individuals has attracted growing attention 
as a statistical tool for performing critical decision-making. We propose a Bayesian inference framework that quantifies the uncertainty in treatment effect estimation to support decision-making in a relatively small sample size setting. Our proposed model places Gaussian process priors on the nonparametric components of a semiparametric model called a partially linear model. This model formulation has three advantages. First, we can analytically compute the posterior distribution of a treatment effect without relying on the computationally demanding posterior approximation. Second, we can guarantee that the posterior distribution concentrates around the true one as the sample size goes to infinity. Third, we can incorporate prior knowledge about a treatment effect into the prior distribution, improving the estimation efficiency. Our experimental results show that even in the small sample size setting, our method can accurately estimate the heterogeneous treatment effects and effectively quantify its estimation uncertainty.
\end{abstract}

\section{Introduction}

\label{sec:intro}

Assessing heterogeneous treatment effects across individuals provides a key foundation for making critical decisions. 
For instance, understanding how greatly medical treatment effects are different across patients is helpful for precision medicine, and evaluating the impact of education programs on learning outcomes is essential for personalized learning. 

A widely used treatment effect measure is a conditional average treatment effect (CATE), which is an average treatment effect across individuals with identical feature attributes. 
CATE estimation is challenging when the number of features of an individual is large. 
Many methods aim to express the complex nonlinearity between treatment effects and features, 
using a nonparametric regression model, such as tree-based models \citep{hahn2020bayesian,hill2011bayesian,wager2018estimation} and neural networks \citep{johansson2016learning,shalit2017estimating,yoon2018ganite,wu2023stable}. 

However, most of these methods only output a single-point estimate 
and cannot consider the uncertainty in CATE estimation. 
This drawback is fatal
because decision-making under uncertainty is usual in many applications,
especially when we can only access a small amount of observational data.
An obvious example would be medical treatment planning.
For example, in the US, more than half of hospitals have fewer beds than 100 \citep{wiens2014study}, 
illustrating the difficulty of obtaining large-scale data
and the importance of considering the uncertainty.

To support decision-making in such critical applications, 
we propose a Bayesian framework for quantifying the CATE estimation uncertainty. 
To deal with a small sample size setting, 
we focus on a semi-parametric model called a partially linear model \citep{engle1986semiparametric}, 
which is linear with respect to the treatment 
but is nonlinear with respect to features, thanks to the nonparametric components. 

Our key idea is to place Gaussian process priors on the nonparametric components in a partially linear model.
This idea has three advantages.
First, we can analytically compute the posterior distribution of CATE.
Such analytical computation is 
much more computationally efficient than 
the approximate Bayesian inference,
which is required with the complex tree-based models \citep{hahn2020bayesian}.
Second, we can theoretically guarantee the asymptotic consistency of 
the posterior distribution of CATE under some mild conditions.
This theoretical guarantee also makes a striking contrast with the tree-based models, 
which have no consistency guarantee due to their model complexity.
Third, we can incorporate the prior knowledge about the CATE 
to improve the estimation accuracy.
To take prior knowledge example, 
consider the education program evaluation, 
where it is known that 
past academic performance is an important feature 
called a \textit{treatment effect modifier} \citep[Chapter 4]{hernan2020causal},
which affects the effect of an education program \citep{yeager2019national}.
We can utilize this prior knowledge
by designing the kernel functions used in the Gaussian process priors.
By contrast, integrating such prior knowledge is impossible 
with the existing Gaussian-process-based model \citep{alaa2018bayesian}, 
which puts the priors not on the CATE but on the outcomes.
Furthermore, our method can address not only binary treatment but also continuous-valued treatment \citep{hirano2004propensity}, which widens the scope of applications 
and is helpful, for instance, in determining the appropriate drug dosage for precision medicine 
\citep{bica2020estimating}. 

\begin{table}[h]
    \centering
    \caption{Comparison with existing Bayesian methods: AP, CG, and PK are acronyms for Analytic form of Posterior, Consistency Guarantee, and Prior Knowledge incorporation.}
    \label{table-methods}
    \small
    \begin{tabular}{lccc}
    \toprule 
    Method  & AP & CG & PK \\
    \midrule
    \citep{alaa2018bayesian}   & $\checkmark$ & $\checkmark$ &                  \\
    \citep{hahn2020bayesian} &             &             & $\checkmark$           \\
    \textbf{Proposed method} & $\checkmark$            & $\checkmark$            & $\checkmark$    \\
    \bottomrule 
    \end{tabular}
\end{table}

\textbf{Our contributions} are summarized as follows:
\begin{itemize}
    \item We establish a Bayesian framework that can effectively quantify the CATE estimation uncertainty in a relatively small sample size setting (Table \ref{table-methods}). To achieve this, we put Gaussian process priors on the nonparametric components of a partially linear model.
    \item We theoretically prove that the posterior distribution of CATE concentrates around the true one, as the sample size goes to infinity (Section \ref{sec:theory}).
    \item We experimentally show that the proposed method can accurately estimate the CATE and effectively quantify its estimation uncertainty, especially when given small and high-dimensional observational data.    
\end{itemize}
Our code is publicly available at \url{https://github.com/holyshun/GP-PLM}.

\section{Preliminaries}

\subsection{Problem Setup}

Our target estimand, CATE, is the average effect of treatment $T$ on outcome $Y$ in a subgroup of individuals with identical feature attributes $\bm{X}=\bm{x}$. Here we consider a binary or continuous-valued treatment ($T \in \{0, 1\}$ or $T\in \mathbb{R}$), a continuous-valued outcome ($Y \in\mathbb{R}$), and
a $d$-dimensional continuous-valued feature vector ($\bm{X} \in\mathbb{R}^d$).

In case of binary treatment $T \in \{0, 1\}$, 
a treatment effect for an individual is measured as 
the difference between random variables called \textit{potential outcomes}, 
$Y^{(1)} - Y^{(0)}$, where 
$Y^{(0)}$ and $Y^{(1)}$ represents outcome $Y$ 
when an individual is untreated ($T=0$) and treated ($T=1$), 
respectively \citep{rubin1974estimating}. 
Unfortunately, we can never observe treatment effect $Y^{(1)} - Y^{(0)}$ 
because we only observe outcome $Y = T Y^{(1)} + (1 - T) Y^{(0)}$ 
and can never jointly observe two potential outcomes $Y^{(0)}$ and $Y^{(1)}$.
For this reason, we focus on the average, CATE, which can be estimated from the data and is defined as the conditional expected value:
\begin{align}
    \mathrm{CATE}(\bm{x})&=\mathbb{E}[Y^{(1)}-Y^{(0)}|\bm{X}=\bm{x}]\nonumber \\
    &=\mathbb{E}[Y^{(1)}|\bm{X}=\bm{x}]-\mathbb{E}[Y^{(0)}|\bm{X}=\bm{x}]. \label{CATE-bi}
\end{align}

We can similarly define the CATE for continuous-valued treatment $T \in \mathbb{R}$, which represents the degree of treatment (e.g., the amount of chemotherapy). In this case, potential outcome $Y^{(t)}$ expresses the outcome when $T = t$ ($t \in \mathbb{R}$). The mean potential outcome across individuals with $\bm{X}=\bm{x}$, i.e., $\mathbb{E}[Y^{(t)}|\bm{X}=\bm{x}]$, can be regarded as a function of $t$, which is called a \textit{dose response function}. By taking the value difference of this function between treatment values $T=t$ and $T=t'$ ($t, t' \in \mathbb{R}$), we can measure the CATE as
\begin{align}
    \mathrm{CATE}(\bm{x}, t, t')=\mathbb{E}[Y^{(t')}|\bm{X}=\bm{x}]-\mathbb{E}[Y^{(t)}|\bm{X}=\bm{x}]. \label{CATE-cont}
\end{align}

Hence, in both treatment setups, we need to estimate mean potential outcome $\mathbb{E}[Y^{(t)}|\bm{X}=\bm{x}]$ for treatment $t \in \mathcal{T}$, where $\mathcal{T} = \{0, 1\}$ or $\mathcal{T} \subseteq \mathbb{R}$. To achieve this, we make two standard assumptions. One is the overlap condition (a.k.a., \textit{positivity}), i.e., $0< p(t|\bm{x})<1$ for all $t \in \mathcal{T}$ and for all $\bm{x}\in\mathcal{X}$ such that $p(\bm{x})>0$.
The other is the \textit{strongly ignorability} condition \citep{imbens2015causal}, $\{Y^{(t)} \colon t \in \mathcal{T}\}\indep T|\bm{X}$; this conditional independence relation is satisfied if features $\bm{X}$ include all confounders and contain only \textit{pretreatment variables}, which are not affected by treatment $T$ unlike mediators and colliders \citep{elwert2014endogenous}.\footnote{In addition, we assume that features $\bm{X}$ do not include pretreatment colliders, as with the standard CATE estimation methods.}
Under these two assumptions,
the mean potential outcome can be reformulated as
\footnote{For the derivation, see, e.g., the survey by \citet{yao2021survey}.}
\begin{align*}
    \mathbb{E}[Y^{(t)}|\bm{X}=\bm{x}]= \mathbb{E}[Y|\bm{X}=\bm{x}, T=t].
\end{align*}
That is, the mean potential outcome is reduced to 
the conditional expectation $\mathbb{E}[Y|\bm{X}=\bm{x}, T=t]$. 
To represent this conditional expectation, 
we employ a partially linear model, which is described below. 

\subsection{Partially Linear Model}
\label{subsec:PLM}

A partially linear model is a semi-parametric regression model introduced by \citet{engle1986semiparametric}. A widely used formulation of this model is
\begin{align}
    Y=\theta T + f(\bm{X})+\varepsilon, \label{PLM}
\end{align}
where $\theta \in \mathbb{R}$ is an unknown parameter representing a treatment effect, $f\colon \mathbb{R}^d \rightarrow \mathbb{R}$ is an unknown nonlinear function that expresses how greatly outcome $Y$ differ depending on the values of features $\bm{X}$, and $\varepsilon \overset{i.i.d.}{\sim} \mathcal{N}(0, s_{\varepsilon}^{-1})$ is a noise that follows a zero-mean Gaussian with precision $s_{\varepsilon} > 0$.

The model formulation \eqref{PLM} has two advantages. First, compared with nonparametric regression models such as tree-based models and neural networks, the estimation requires a much smaller sample size even when feature vector $\bm{X}$ is high-dimensional. Such a sample-size efficiency is essential to achieve practical applications with high data acquisition costs.
Second, despite this efficiency, the model \eqref{PLM} can represent the complex nonlinearity between outcome $Y$ and $\bm{X}$, using nonlinear function $f(\bm{X})$. 

By contrast, a drawback of the model \eqref{PLM} is that it cannot capture the treatment effect heterogeneity. This is because the treatment effect parameter $\theta$ is a constant with respect to features $\bm{X}$; hence, it cannot express how greatly the treatment effect changes depending on $\bm{X}$'s values. 

To overcome this drawback, we focus on the following variant of the partially linear model:
\begin{align}
    Y=\theta(\bm{X}) T+f(\bm{X})+\varepsilon, \label{proposed_model}
\end{align}
where $\theta\colon \mathbb{R}^d \rightarrow \mathbb{R}$ is an unknown nonlinear function. A significant difference from \eqref{PLM} is that $\theta(\cdot)$ in \eqref{proposed_model} is a function and can represent how strongly a treatment effect varies with $\bm{X}$'s values, thus addressing treatment effect heterogeneity. 

With the model \eqref{proposed_model},
the CATEs for binary and continuous-valued treatments in \eqref{CATE-bi} and \eqref{CATE-cont} are given by
\begin{align}
    \mathrm{CATE}(\bm{x}) = \theta(\bm{x}); \quad \mathrm{CATE}(\bm{x}, t, t') = (t' - t) \theta(\bm{x}).\label{eq-CATEs}
\end{align}
Hence, the CATE estimation reduces to 
the problem of estimating function $\theta(\cdot)$. 
In fact, other treatment effect measure called a \textit{conditional derivative effect}, $\mathrm{lim}_{\xi \rightarrow 0} \xi^{-1} \mathbb{E}[Y^{(t+\xi)} - Y^{(t)} | \bm{X}=\bm{x}]$ ($t \in \mathbb{R}$), can also be inferred 
by estimating $\theta(\cdot)$, which we detail in Appendix A.

Estimator $(t' - t) \theta(\bm{x})$ in \eqref{eq-CATEs} assumes that the CATE is linear with respect to treatment $T \in \mathbb{R}$; 
this assumption might be restrictive in some applications.
However, if we have prior knowledge about the functional relationship between outcome $Y$ and treatment $T$,
it is straightforward to use a pre-specified nonlinear function, 
$h\colon \mathbb{R} \rightarrow \mathbb{R}$ (e.g., $h(T) = \sqrt{T}$),
to reformulate $\theta(\bm{X}) T$ in \eqref{proposed_model} as $\theta(\bm{X}) h(T)$
and the CATE as $\mathrm{CATE}(\bm{x}, t, t') = (h(t') - h(t)) \theta(\bm{x})$.
One idea for how to formulate function $h$ is 
to follow the functional form between $Y$ and $T$
of the parametric models that are commonly used in the field. 
For instance, in medical treatment planning,
a sigmoid function is widely used in the dose-response curve models 
\citep{hill1910possible,hamilton1977trimmed}.
Developing a data-driven way to infer function $h$ is left as our future work.

Indeed, the partially linear model formulation \eqref{proposed_model} has also been studied 
in the treatment effect estimation framework
called \textit{Double/Debiased Machine Learning} (DML) \citep{chernozhukov2018double}. 
This framework is founded on the Frequentist approach 
and quantifies the uncertainty in estimating function $\theta$ 
with the confidence interval.
However, as shown by \citet{van2000asymptotic}, in a finite sample size setting, there is no theoretical guarantee about the uncertainty estimation with a confidence interval, and hence, the uncertainty estimation can be inaccurate.
To resolve this issue, we develop a Bayesian approach 
that infers the posterior distribution of $\theta$.

\section{Proposed Model}
\label{sec:posterior}

This section presents the derivation of the posterior distribution 
of function $\theta$ in the partially linear model in \eqref{proposed_model}.

Our posterior distribution can be formulated as follows. 
Suppose that each observation is obtained as 
$(t_i, \bm{x}_i, y_i) \overset{i.i.d.}{\sim} p(t, \bm{x}, y)$ for $i = 1, \dots, n$
and that we have $n$ observations 
$\mathcal{D}_n = (\bm{t}_{n}, \bm{X}_{n}, \bm{y}_{n})$,
where $\bm{t}_{n}=(t_{1},\ldots, t_{n})$, $\bm{X}_{n}=(\bm{x}_{1},\ldots, \bm{x}_{n})$, 
and $\bm{y}_{n}=(y_{1},\ldots, y_{n})$.
Our goal is to estimate the CATE values 
for a pre-specified set of $m$ feature vector values 
$\tilde{\bm{X}}_{m}=(\tilde{\bm{x}}_{1},\ldots, \tilde{\bm{x}}_{m})$.
Hence, our target posterior can be formulated as
posterior predictive distribution
$p(\tilde{\bm{\theta}}_{m}|\mathcal{D}_n, \tilde{\bm{X}}_{m})$,
where $\tilde{\bm{\theta}}_{m}=(\theta(\tilde{\bm{x}}_1), \dots, \theta(\tilde{\bm{x}}_m))$.
Note that this posterior predictive is different from that of
the standard Gaussian process regression:
it is the distribution of a function representing the CATE, not outcome $Y$.

\subsection{Priors}

To formulate the posterior predictive distribution, we place the Gaussian process prior distributions on functions $\theta$ and $f$ in the partially linear model \eqref{proposed_model} as
\begin{align}
    \theta(\cdot)&\sim \mathcal{GP}(0, C(\cdot,\cdot; \bm{\omega}_{\theta})), \label{GP-theta}\\
    f(\cdot)&\sim \mathcal{GP}(0, C(\cdot, \cdot; \bm{\omega}_{f})) \label{GP-f},
\end{align}
where $C(\cdot, \cdot; \bm{\omega})$ is the covariance function with parameter $\bm{\omega}$. Here, for notation simplicity,
we set the mean functions of the Gaussian processes to zero. 
Note that such zero-mean priors do not lose generality 
because they never restrict the posterior means, 
which are updated with the observed data (See, e.g., \citet{williams2006gaussian}).

A common formulation of covariance function $C(\cdot, \cdot; \bm{\omega})$ 
in \eqref{GP-theta} and \eqref{GP-f} is
the radial basis function (RBF) kernel:
\begin{align}
    C(\bm{x}, \bm{x}'; \omega )=\exp\left\{-\omega\|\bm{x}-\bm{x}'\|^{2}\right\} \quad (\bm{x}, \bm{x}' \in \mathbb{R}^d),\label{gaussian_kernel}
\end{align}
where $\omega > 0$ is a hyperparameter. 
We can also design the covariance function 
by utilizing our prior knowledge.
For example, if some features in $\bm{X}$ are known to be 
treatment effect modifiers, i.e.,
important features that
explain treatment effect heterogeneity,
we can formulate  $C(\cdot,\cdot; \bm{\omega}_{\theta})$ as
\begin{align}
    C(\bm{x}, \bm{x}'; \bm{\omega}_{\theta})=\exp\left\{-\sum_{k=1}^d
    \omega_{\theta,k} (x_k - x'_k)^{2}\right\},\label{modified_gaussian_kernel}
\end{align}
where $\bm{\omega}_{\theta} = (w_{\theta,1}, \dots, w_{\theta,d})$ is a vector of hyperparameters, 
whose $k$-th element $\omega_{\theta,k} \in \mathbb{R}^{\geq 0}$ 
represents the $k$-th feature's importance, 
which is given based on prior knowledge.
In Appendix E, we show that
using covariance function \eqref{modified_gaussian_kernel} 
leads to better estimation performance.

\subsection{Derivation of Posterior Predictive}

We show that we can analytically compute posterior predictive 
$p(\tilde{\bm{\theta}}_{m}|\mathcal{D}_n, \tilde{\bm{X}}_{m})$
if the prior hyperparameters, $\bm{\omega}_{\theta}$ and $\bm{\omega}_{f}$, and noise distribution parameter $s_{\varepsilon}$ are given.

To derive this analytic form, we take three steps. 
First, we derive the joint distribution 
$p(\Theta, \bm{y}_n|\bm{t}_n, \bm{X}_{n}, \tilde{\bm{X}}_{m})$, 
where $\Theta=(\bm{\theta}_{n}, \tilde{\bm{\theta}}_{m}, \bm{f}_{n})$
denotes a set of function values, including 
$\bm{\theta}_{n}=(\theta(\bm{x}_{1}),\ldots, \theta(\bm{x}_{n}))$ 
and $\bm{f}_{n}=(f(\bm{x}_{1}),\ldots,f(\bm{x}_{n}))$.
Second, by conditioning $\bm{y}_n$,
we obtain joint posterior 
$p(\Theta|\mathcal{D}_n,\tilde{\bm{X}}_{m})$.
Finally, we marginalize out $\bm{\theta}_{n}$ and $\bm{f}_{n}$ in $\Theta$ 
to derive posterior predictive 
$p(\tilde{\bm{\theta}}_{m}|\mathcal{D}_n,\tilde{\bm{X}}_{m})$.

\subsubsection{Joint Distribution}
 $p(\Theta, \bm{y}_{n}|\bm{t}_n, \bm{X}_{n}, \tilde{\bm{X}}_{m})$ is given as a product of the likelihood and the joint prior:
\begin{align}
    p(\Theta, \bm{y}_{n}|\bm{t}_{n}, \bm{X}_{n}, \tilde{\bm{X}}_{m}) = p(\bm{y}_{n}|\bm{\theta}_{n},\bm{f}_{n}, \bm{t}_{n})p(\Theta|\bm{X}_{n},\tilde{\bm{X}}_{m}).\label{joint}
\end{align}

Joint prior $p(\Theta|\bm{X}_{n},\tilde{\bm{X}}_{m})$ in \eqref{joint} is
given by the Gaussian process priors in \eqref{GP-theta} and \eqref{GP-f}
and hence is formulated as the multivariate Gaussian:
\begin{align}
    p(\Theta|\bm{X}_{n}, \tilde{\bm{X}}_{m})= \mathcal{N}(\bm{0}, \bm{\Sigma}_{\Theta\Theta}), \label{joint-Theta} 
\end{align}
where $\bm{\Sigma}_{\Theta\Theta}$ denotes the following covariance matrix:\footnote{$\bm{O}$ and $\bm{I}$ are the zero and identity matrices, respectively.
In this paper, all zero and identity matrices are denoted by $\bm{O}$ and $\bm{I}$, regardless of their matrix sizes.}
\begin{align*}
    \bm{\Sigma}_{\Theta\Theta} = \left(
    \begin{array}{lll}
    \bm{\Phi}_{nn} & \bm{\Phi}_{nm} & \bm{O}\\
    \bm{\Phi}_{nm}^{T} & \bm{\Phi}_{mm} & \bm{O}\\
    \bm{O} & \bm{O} & \bm{\Psi}_{nn}
    \end{array}
    \right),
\end{align*}
whose elements are given by
\begin{equation}
    \begin{split}
    \bm{\Phi}_{nn}&=\left(C(\bm{x}_{i},\bm{x}_{j};\bm{\omega}_{\theta})\right)_{1\le i,j\le n},\\ \bm{\Phi}_{nm}&=\left(C(\bm{x}_{i},\tilde{\bm{x}}_{j};\bm{\omega}_{\theta})\right)_{1\le i\le n, 1\le j\le m},\\ \bm{\Phi}_{mm}&=\left(C(\tilde{\bm{x}}_{i},\tilde{\bm{x}}_{j};\bm{\omega}_{\theta})\right)_{1\le i,j\le m}, \\\bm{\Psi}_{nn}&=\left(C(\bm{x}_{i},\bm{x}_{j};\bm{\omega}_{f})\right)_{1\le i,j\le n}.
    \end{split} \label{Phi}
\end{equation}

By contrast, likelihood 
$p(\bm{y}_{n}|\bm{\theta}_{n}, \bm{f}_{n}, \bm{t}_{n})$ in \eqref{joint}
is given by the partially linear model with a Gaussian noise in \eqref{proposed_model}. Hence, it is formulated as the multivariate Gaussian:
\begin{align}
    p(\bm{y}_{n}|\bm{\theta}_{n},\bm{f}_{n}, \bm{t}_{n})= \mathcal{N}(\bm{T}_{n}\bm{\theta}_{n}+\bm{f}_{n}, s_{\varepsilon}^{-1}\bm{I}), \label{likelihood}
\end{align}
where $\bm{T}_{n}=\mbox{diag}(\bm{t}_{n})$ is a diagonal matrix, whose diagonal component is $\bm{t}_{n}$.

Thus, both 
$p(\Theta|\bm{X}_{n},\tilde{\bm{X}}_{m})$ 
and $p(\bm{y}_{n}|\bm{\theta}_{n}, \bm{f}_{n}, \bm{t}_{n})$ 
are given as multivariate Gaussians. 
Therefore, joint distribution 
$p(\Theta, \bm{y}_{n}|\bm{t}_{n},\bm{X}_{n},\tilde{\bm{X}}_{m})$ 
is also a multivariate Gaussian. 
This Gaussian has mean $\bm{0}$
because both the joint prior in \eqref{joint-Theta} and the likelihood in \eqref{likelihood} are zero-mean;
the mean in \eqref{likelihood}, $\bm{T}_{n}\bm{\theta}_{n}+\bm{f}_{n}$, 
is zero because $p(\bm{\theta}_{n}, \bm{f}_{n}|\bm{X}_{n})$ is zero-mean.
As regards covariance matrix $\bm{\Sigma}$, 
we can explicitly express precision matrix $\bm{S} = \bm{\Sigma}^{-1}$ as
\begin{multline}
    \bm{S}=\left(
    \begin{array}{lll}
    \bm{\Phi}^{-1} & \bm{O} & \bm{O}\\
    \bm{O} & \bm{\Psi}_{nn}^{-1} & \bm{O}\\
    \bm{O} & \bm{O} & s_{\epsilon}\bm{I}
    \end{array}
    \right)+\\
    \left(
    \begin{array}{cccc}
    s_{\epsilon}\bm{T}_n^{2} & \bm{O} & s_{\epsilon}\bm{T}_n & -s_{\epsilon}\bm{T}_n\\
    \bm{O} & \bm{O} & \bm{O} & \bm{O}\\
    s_{\epsilon}\bm{T}_n & \bm{O} & s_{\epsilon}\bm{I} & -s_{\epsilon}\bm{I}\\
    -s_{\epsilon}\bm{T}_n & \bm{O} & -s_{\epsilon}\bm{I} & \bm{O}
    \end{array}
    \right),\label{S}
\end{multline}
where $\bm{\Phi}$ is the block matrix with the elements in \eqref{Phi}:
\begin{align*}
    \bm{\Phi} = \left(
    \begin{array}{cc}
    \bm{\Phi}_{nn} & \bm{\Phi}_{nm}\\
    \bm{\Phi}_{nm}^{T} & \bm{\Phi}_{mm} 
    \end{array}
    \right).
\end{align*}

\subsubsection{Joint Posterior}

$p(\Theta|\mathcal{D}_n,\tilde{\bm{X}}_{m})$ is obtained from joint distribution 
$p(\Theta, \bm{y}_{n}|\bm{t}_n, \bm{X}_{n}, \tilde{\bm{X}}_{m})$ in \eqref{joint}
by conditioning $\bm{y}_n$.

To confirm this, consider 
the covariance of the joint distribution 
in \eqref{joint},
i.e., $\bm{\Sigma}$,
whose precision matrix $\bm{S} = \bm{\Sigma}^{-1}$ is given by \eqref{S}.
Let us rephrase 
this covariance matrix as
\footnote{For example, $\bm{\Sigma}_{\Theta\Theta}$ is the submatrix of $\bm{S}^{-1}$, consisting of rows to $1$ to $2n$ and columns $1$ to $2n$.}
\begin{align*}
    \bm{\Sigma}=\left(
    \begin{array}{cc}
    \bm{\Sigma}_{\Theta\Theta} & \bm{\Sigma}_{\Theta\bm{y}_n}\\
    \bm{\Sigma}_{\bm{y}_n \Theta} & \bm{\Sigma}_{\bm{y}_n \bm{y}_n}
    \end{array}
    \right).
\end{align*}
Then, using the formula of conditional multivariate Gaussians 
(see e.g., \citet{bishop:2006:PRML, williams2006gaussian}), 
we can condition $\bm{y}_n$
and show that joint posterior $p(\Theta|\mathcal{D}_n,\tilde{\bm{X}}_{m})$
is the following multivariate Gaussian:
\begin{align}
    p(\Theta|\mathcal{D}_n,\tilde{\bm{X}}_{m})= 
    \mathcal{N}(\bm{\mu}_{\Theta|\bm{y}_n}, 
    \bm{\Sigma}_{\Theta|\bm{y}_n}). \label{joint-posterior}
\end{align}
whose mean $\bm{\mu}_{\Theta|\bm{y}_n}$ and 
covariance $\bm{\Sigma}_{\Theta|\bm{y}_n}$ are given by 
\begin{align*}   
    \bm{\mu}_{\Theta|\bm{y}_n}&=\bm{M}\bm{y}_{n},\\
    \bm{\Sigma}_{\Theta|\bm{y}_n}&=\bm{\Sigma}_{\Theta\Theta}-\bm{M} \bm{\Sigma}_{\bm{y}_n \Theta},
\end{align*}
where $\bm{M}=\bm{\Sigma}_{\Theta\bm{y}}\bm{\Sigma}_{\bm{yy}}^{-1}$.

\subsubsection{Posterior Predictive}

$p(\tilde{\bm{\theta}}_{m}|\mathcal{D}_n,\tilde{\bm{X}}_{m})$ is obtained
by marginalizing out $\bm{\theta}_{n}$ and $\bm{f}_{n}$ in $\Theta$
from joint posterior 
$p(\Theta|\mathcal{D}_n,\tilde{\bm{X}}_{m})$ in \eqref{joint-posterior}.

To see this, consider the submatrices 
in $\bm{M}$ and $\bm{\Sigma}_{\Theta|\bm{y}_n}$:
\begin{align*}
    \bm{M}&=\left(
    \begin{array}{c}
    \bm{M}_{\bm{\theta}}\\
    \bm{M}_{\tilde{\bm{\theta}}}\\
    \bm{M}_{\bm{y}}
    \end{array}
    \right),\\
    \bm{\Sigma}_{\Theta|\bm{y}_n}&=\left(
    \begin{array}{ccc}
    \bm{\Sigma}_{\bm{\theta\theta}|\bm{y}_n} & \bm{\Sigma}_{\bm{\theta}\tilde{\bm{\theta}}|\bm{y}_n} & \bm{\Sigma}_{\bm{\theta}\bm{f}|\bm{y}_n}\\
    \bm{\Sigma}_{\tilde{\bm{\theta}}\bm{\theta}|\bm{y}_n} & \bm{\Sigma}_{\tilde{\bm{\theta}}\tilde{\bm{\theta}}|\bm{y}_n} & \bm{\Sigma}_{\tilde{\bm{\theta}}\bm{f}|\bm{y}_n}\\
    \bm{\Sigma}_{\bm{f}\bm{\theta}|\bm{y}_n} & \bm{\Sigma}_{\bm{f}\tilde{\bm{\theta}}|\bm{y}_n} & \bm{\Sigma}_{\bm{f}\bm{f}|\bm{y}_n},
    \end{array}
    \right).
\end{align*}
With these notations, by marginalizing out $\bm{\theta}_{n}$ and $\bm{f}_{n}$,
we can formulate posterior predictive 
$p(\tilde{\bm{\theta}}_{m}|\mathcal{D}_n,\tilde{\bm{X}}_{m})$
as the following multivariate Gaussian:
\begin{align}
    p(\tilde{\bm{\theta}}_{m}|\mathcal{D}_n,\tilde{\bm{X}}_{m})
    = \mathcal{N}(\bm{M}_{\tilde{\bm{\theta}}}\bm{y}_{n}, \bm{\Sigma}_{\tilde{\bm{\theta}}\tilde{\bm{\theta}}|\bm{y}_n}). \label{posterior}
\end{align}

By marginalizing $\bm{f}_n$ and formulating the posterior in this way, 
we remove the estimation bias arising from nuisance parameter $f$,
which corresponds to the \textit{confounding bias} 
arising from the confounders in features $\bm{X}$.
With such Bayesian inference,
we have no need to estimate the \textit{propensity score} model,
unlike the Frequentist approach.
We detail the reason for this in Section \ref{subsec:related-NonBayesian}.

The posterior predictive in \eqref{posterior} 
requires computation time $O(n^{3})$ for sample size $n$, 
which might be problematic when $n$ is large.
However, as with the standard Gaussian process regression,
we can apply various approximation techniques, such as the sparse GP \citep{quinonero2005unifying},
to deal with a large-scale dataset.

\subsection{Addressing Unknown Hyperparameters}
\label{subsec:hyper}

So far, we have derived the posterior predictive under the assumption that 
the values of hyperparameters 
$\bm{\omega}_{\theta}$, $\bm{\omega}_{f}$, and $s_{\varepsilon}$ are given.
Since their true values are unknown in practice,
we must determine them using the observed data.
Below, we present the two data-driven approaches.

One is to put the priors on these hyperparameters.
In this case, the posterior predictive differs from \eqref{posterior},
and the analytic form is no longer available.
Hence, we will need to approximate the posterior predictive,
using Markov chain Monte Carlo (MCMC) methods,
such as the Metropolis-Hastings (MH) algorithm. 
Unfortunately, this approximation requires much computation time
and might be impractical.

For this reason, in our experiments, we took the other approach,
which estimates the values of hyperparameters 
$\bm{\omega}_{\theta}$, $\bm{\omega}_{f}$, and $s_{\varepsilon}$ 
by maximizing the marginal likelihood:
\begin{align*}
    &p(\bm{y}_n| \bm{t}_n, \bm{X}_n
    ;\bm{\omega}_{\theta}, \bm{\omega}_{f}, s_{\varepsilon})\\
    =&\int\int p(\bm{y}_n,\bm{\theta}_n,\bm{f}_n | \bm{t}_n, \bm{X}_n
    ;\bm{\omega}_{\theta}, \bm{\omega}_{f}, s_{\varepsilon})\mathrm{d}\bm{\theta}_n \mathrm{d}\bm{f}_n,
\end{align*}
where we marginalize out $\bm{\theta}_n$ and $\bm{f}_n$ 
from joint distribution:
\begin{align}
    &p(\bm{y}_n,\bm{\theta}_n,\bm{f}_n | \bm{t}_n, \bm{X}_n
    ;\bm{\omega}_{\theta}, \bm{\omega}_{f}, s_{\varepsilon}) \nonumber \\
    =&p(\bm{y}_n|\bm{\theta}_n,\bm{f}_n,\bm{t}_n ;s_{\varepsilon})
    p(\bm{\theta}_n|\bm{X}_n;\bm{\omega}_{\theta})
    p(\bm{f}_n|\bm{X}_n;\bm{\omega}_{f}).\label{marginal_likelihood}
\end{align}
Here the three distributions in the r.h.s. of \eqref{marginal_likelihood}
are given as 
\begin{align*}
&p(\bm{y}_n|\bm{\theta}_n,\bm{f}_n,\bm{t}_n ;s_{\varepsilon}) 
= \mathcal{N}(\bm{T}_n\bm{\theta}_n+\bm{f}_n,s_{\varepsilon}^{-1}\bm{I}),
\\
&p(\bm{\theta}_n|\bm{X}_n;\bm{\omega}_{\theta})=\mathcal{N}(\bm{0},\bm{\Phi}_{nn}),\\
&p(\bm{f}_n|\bm{X}_n;\bm{\omega}_{f})=\mathcal{N}(\bm{0},\bm{\Psi}_{nn}),
\end{align*}
respectively. Hence, the joint distribution in \eqref{marginal_likelihood} 
is also a multivariate Gaussian with mean $\bm{0}$ and covariance matrix
\begin{align*}
        \left(
    \begin{array}{ccc}
    s_{\epsilon}\bm{I} & -s_{\epsilon}\bm{T}_n & -s_{\epsilon}\bm{I}\\
    -s_{\epsilon}\bm{T}_n & s_{\epsilon}\bm{T}_n^{2}+\bm{\Phi}_{nn}^{-1} & s_{\epsilon}\bm{T}_n\\
    -s_{\epsilon}\bm{I} & s_{\epsilon}\bm{T}_n & s_{\epsilon}\bm{I}+\bm{\Psi}_{nn}^{-1}
    \end{array}
    \right)^{-1}.
\end{align*}
Using the matrix formula for the Schur complement,
we can marginalize  $\bm{\theta}_n$ and $\bm{f}_n$ from this multivariate Gaussian
and obtain the marginal likelihood 
$p(\bm{y}_n| \bm{t}_n, \bm{X}_n
;\bm{\omega}_{\theta}, \bm{\omega}_{f}, s_{\varepsilon})$
as a multivariate Gaussian with mean $\bm{0}$ and covariance matrix
\begin{align*}
s_{\epsilon}^{-1}\bm{I}+\bm{T}_n \bm{\Phi}_{nn}\bm{T}_n +\bm{\Psi}_{nn}.
\end{align*}
This marginal likelihood is not necessarily convex.
For this reason, in our experiments, we maximize it with respect to 
$\bm{\omega}_{\theta}$, $\bm{\omega}_{f}$, and $s_{\varepsilon}$ 
by combining the grid search and the gradient descent method.
See Appendix D.2 for the details.

\section{Theoretical Analysis} \label{sec:theory}

This section aims to guarantee the asymptotic convergence of our posterior distribution.
In particular, 
we prove the \textit{strong posterior consistency} 
\citep{ghosal1999posterior},
whose definition can be used
for non-parametric Bayesian models, including Gaussian processes.

This consistency notion is determined
whether the random measure representing the posterior
converges to the true data-generating distribution,
as sample size $n \rightarrow \infty$.
With the partially linear model,
the data-generating distribution is defined with 
functions $\theta$ and $f$.
Since we put the Gaussian process priors,
these functions themselves have randomness.
For this reason, we consider the posterior of their joint density,
$p_{\theta,f}(\bm{x}, t, y)=p(y|\bm{x},t,\theta,f)p(t|\bm{x})p(\bm{x})$.
Our goal is to show that as $n \rightarrow \infty$,
this posterior concentrates around its true joint density,
$p_{\theta_{0}, f_{0}}$ with high probability, 
where $\theta_{0}$ and $f_{0}$ denote the true data-generating functions.

To achieve this goal, we extend the results by \citet{ghosal1999posterior},
which proves the posterior consistency 
for the standard Gaussian process regression problems.
As with these results, 
we make Assumptions (\textbf{P}; Smoothness of priors), 
(\textbf{F}; Bounded feature space), 
(\textbf{T}; True functions), 
and (\textbf{E}; Exponential decay of priors).
Among them,
Assumption (\textbf{F}) requires 
feature vector values $\bm{x}$ to belong to a bounded subset of $\mathbb{R}^d$,
and Assumption (\textbf{T}) imposes the condition that
true functions $\theta_{0}$ and $f_{0}$ belong to the reproducing kernel Hilbert space (RKHS) of a kernel function used in the covariance function in Gaussian process priors.
Regarding the technical assumptions on priors 
(Assumptions (\textbf{P}) and (\textbf{E})), we detail them in Appendix B.

To extend the results by \citet{ghosal1999posterior} 
to the CATE estimation problem, 
we make an additional assumption on the boundedness of 
the conditional moments of treatment $T$ given features $\bm{X}$:
\begin{description}
    \item[(B)] (\textbf{Boundedness of conditional moments})
    There exist constants $C_{1}>0$ and $C_{2}>0$ such that
    \begin{align*}
        \mathbb{E}\left[T|\bm{X}\right]<C_{1}\quad \mathbb{P}_{\bm{X}}\textrm{-}a.s.;\quad
        \mathbb{E}\left[T^{2}|\bm{X}\right]<C_{2}\quad \mathbb{P}_{\bm{X}}\textrm{-}a.s.
    \end{align*}
\end{description}
This assumption imposes the conditional mean and variance of $T$ given $\bm{X}$ to be at most $C_1$ and $C_2$, respectively.

Using Assumptions (\textbf{P}), (\textbf{F}), (\textbf{T}), (\textbf{E}), and (\textbf{B}), we prove the consistency of the posterior of $\theta$ and $f$.
For this purpose, we use the $L_{1}$ metric between $p_{\theta, f}$ and $p_{\theta_{0},f_{0}}$:
\begin{align*}
    &\|p_{\theta, f}-p_{\theta_{0},f_{0}}\|_{L_{1}}\\
    =&\sum_{t}\iint |p_{\theta, f}(\bm{x},t,y)-p_{\theta_{0}, f_{0}}(\bm{x},t,y)|{\rm d}y{\rm d}\bm{x}.
\end{align*}
Let $\mathbb{P}_{0}^{n}$ be the true distribution of sample $\mathcal{D}_{n} =(\bm{t}_n, \bm{X}_n, \bm{y}_n)$, and $\Pi$ be the prior distribution of $p_{\theta, f}$ when the parameters of $\theta$ and $f$ are distributed according to priors $\Pi_{\tau_{\theta}}, \Pi_{\tau_{f}}, \Pi_{\lambda_{\theta}}$, and $\Pi_{\lambda_{f}}$. Then the following theorem holds.
\begin{thm}\label{consitency_theorem}
Suppose that Assumptions (\textbf{P}), (\textbf{F}), (\textbf{T}), (\textbf{E}), and (\textbf{B}) hold.
Then for any $\epsilon>0$,
\begin{align}
    \Pi\left((\theta, f):\|p_{\theta,f}-p_{\theta_{0},f_{0}}\|_{L_{1}}>\epsilon\ |\ \mathcal{D}_{n}\right)\to 0
\end{align}
with $\mathbb{P}_{0}^{n}$-probability $1$.
\end{thm}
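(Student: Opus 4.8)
The plan is to invoke the general framework for strong posterior consistency of Schwartz, as refined by \citet{ghosal1999posterior}, which reduces the claim to verifying two ingredients: (i) a \emph{Kullback--Leibler (KL) support condition}, namely that the prior $\Pi$ assigns positive mass to every KL neighborhood of the true density $p_{\theta_0,f_0}$; and (ii) the existence of a sequence of \emph{sieves} $\mathcal{F}_n$ in the space of pairs $(\theta,f)$ such that the prior mass of the complement $\mathcal{F}_n^c$ decays exponentially in $n$ while the $L_1$-metric entropy of $\mathcal{F}_n$ grows sub-linearly, so that exponentially consistent tests separating $p_{\theta_0,f_0}$ from $\{(\theta,f):\|p_{\theta,f}-p_{\theta_0,f_0}\|_{L_1}>\epsilon\}$ exist. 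Given these, the posterior concentration claimed in the theorem follows from the standard sieve-and-test argument with $\mathbb{P}_0^n$-probability one.

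First I would verify the KL support condition. Since the conditional law of $Y$ is Gaussian with mean $\theta(\bm{x})t+f(\bm{x})$ and fixed precision $s_\varepsilon$, while the factor $p(t|\bm{x})p(\bm{x})$ is common to both densities, the KL divergence collapses to a $T$-weighted mean-squared discrepancy:
\begin{align*}
\mathrm{KL}(p_{\theta_0,f_0}\,\|\,p_{\theta,f})=\tfrac{s_\varepsilon}{2}\,\mathbb{E}\big[\big((\theta_0(\bm{X})-\theta(\bm{X}))T+(f_0(\bm{X})-f(\bm{X}))\big)^2\big].
\end{align*}
Expanding the square and conditioning on $\bm{X}$, the terms linear and quadratic in $T$ are governed by $\mathbb{E}[T\mid\bm{X}]$ and $\mathbb{E}[T^2\mid\bm{X}]$, which Assumption (\textbf{B}) bounds by $C_1$ and $C_2$, giving
\begin{align*}
\mathrm{KL}(p_{\theta_0,f_0}\,\|\,p_{\theta,f})\le\tfrac{s_\varepsilon}{2}\big(C_2\|\theta-\theta_0\|_\infty^2+2C_1\|\theta-\theta_0\|_\infty\|f-f_0\|_\infty+\|f-f_0\|_\infty^2\big).
\end{align*}
Hence every KL neighborhood of $p_{\theta_0,f_0}$ contains a sup-norm ball around $(\theta_0,f_0)$. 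Because Assumptions (\textbf{P}) and (\textbf{F}) make the Gaussian-process sample paths continuous on a bounded domain, and Assumption (\textbf{T}) places $\theta_0,f_0$ in the RKHS of the covariance kernels, the classical support property of Gaussian measures (the RKHS is dense in the topological support, cf.\ \citet{ghosal1999posterior}) shows that each prior charges every sup-norm ball around its true function with positive probability; by independence the joint prior charges the product neighborhood, establishing (i).

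Next I would construct the sieves and tests. Mirroring \citet{ghosal1999posterior}, I would take $\mathcal{F}_n$ to be sets on which $\theta$, $f$, and suitable derivatives are uniformly bounded; the exponential-decay Assumption (\textbf{E}) on the scale and bandwidth hyperparameters forces $\Pi(\mathcal{F}_n^c)\le C e^{-bn}$, while the smoothness Assumption (\textbf{P}) together with the bounded feature domain (\textbf{F}) controls the sup-norm covering numbers of $\mathcal{F}_n$, and hence its $L_1$-entropy, at the required sub-linear rate. To convert covering numbers into $L_1$ separation of densities I would again use that, for equal-variance Gaussians, the $L_1$ distance between $p_{\theta,f}$ and $p_{\theta_0,f_0}$ is controlled by the same $T$-weighted mean discrepancy as above, so Assumption (\textbf{B}) translates an $L_1$ separation of size $\epsilon$ into a separation of the function means, for which the Le Cam--Birg\'e construction yields exponentially consistent tests. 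Feeding (i) and (ii) into the Ghosal--Ghosh--Ramamoorthi theorem completes the argument.

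The main obstacle is precisely the multiplicative interaction $\theta(\bm{X})\,T$, which distinguishes our partially linear model from the purely additive regression setting of \citet{ghosal1999posterior}. This term couples the unknown $\theta$ to the treatment, so discrepancies in $\theta$ are amplified by $T$ and enter every divergence and distance through $T$-weighted expectations rather than plain $L_2(\mathbb{P}_{\bm{X}})$ norms; the cross term between $(\theta-\theta_0)$ and $(f-f_0)$ further entangles the two function components. Bounding both the KL divergence and the $L_1$-to-function-space reduction therefore hinges on controlling these $T$-weighted terms uniformly in $\bm{X}$, which is exactly what Assumption (\textbf{B}) supplies by bounding the first two conditional moments of $T$. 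Without it, KL neighborhoods need not contain function-space neighborhoods and the separating tests could fail to be exponentially consistent. Once (\textbf{B}) is in force, the entropy bookkeeping reduces to that for two independent Gaussian processes on a compact domain, and the remaining steps follow the template of \citet{ghosal1999posterior}.
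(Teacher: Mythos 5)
Your proposal is correct and follows essentially the same route as the paper's proof: both reduce the claim to the Ghosal--Ghosh--Ramamoorthi conditions (KL support, sub-linear $L_1$-metric entropy of a smoothness sieve, and exponentially small prior mass outside it), and both hinge on the same key bound, namely expanding the Gaussian KL divergence of the partially linear means and using Assumption (\textbf{B}) with Cauchy--Schwarz to dominate the $T$-weighted cross terms by sup-norm distances of $(\theta,f)$, which then controls both the KL neighborhoods and the $L_1$ covering numbers (the paper does the latter explicitly via Pinsker's inequality). The only cosmetic differences are that you spell out the Le Cam--Birg\'e test construction and the Gaussian-measure support property, where the paper instead cites Theorem 2 and Theorem 4 of \citet{ghosal1999posterior} directly.
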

\begin{proof}
    See Appendix C for the formal proof.
    Here, we provide a proof sketch.
    From Assumptions (\textbf{P}),  (\textbf{F}),  (\textbf{T}), and  (\textbf{E}),
    we can show that the probability that joint density $p_{\theta,f}$ is not a smooth function is exponentially small.
    Hence, we only have to consider a set of smooth joint density functions.
    With the above four assumptions, 
    we can bound the \textit{metric entropy} of this smooth function set. 
    This allows us to bound the covering number
    and to upper bound the probability of the event that 
    $p_{\theta, f}$ does not enter the neighborhood of $p_{\theta_{0}, f_{0}}$, 
    using a union bound.
    Finally, using Assumption (\textbf{B}), we can bound 
    the Kullback-Leibler divergence between 
    $p_{\theta, f}$ and $p_{\theta_{0},f_{0}}$,
    which enables us to bound the $L_1$ metric, using Pinsker's inequality.    
\end{proof}

\section{Related Work}

\subsection{Bayesian Approaches to CATE Estimation}

The key idea of our method is to put a Gaussian process prior 
on function $\theta(\bm{X})$ in \eqref{proposed_model}, 
which represents how the CATE varies with the values of features $\bm{X}$. 
While \citet{alaa2018bayesian} also employ the Gaussian process priors, 
they propose to place them on 
each mean potential outcome function,
which is denoted by $f_t(\cdot)$ in their model formulation:
\begin{align*}
    Y^{(t)}=f_{t}(\bm{x})+\varepsilon, \quad t\in\left\{0,1\right\}.
\end{align*}
Compared with this model, ours has two advantages. 
First, as described in Section \ref{sec:intro}, 
it can incorporate prior knowledge about the CATE 
by designing covariance function $C(\cdot, \cdot, \bm{\omega}_\theta)$ 
in \eqref{GP-theta};
we provide a formulation example in \eqref{modified_gaussian_kernel},
which can utilize the prior knowledge about treatment effect modifiers.
Second, it can deal with the continuous-valued treatment setup;
hence, the scope of applications is wider.

To develop a Bayesian approach under the continuous-valued treatment setup, 
several methods use a tree-based model called Bayesian additive regression trees (BART)  \citep{hill2011bayesian, woody2020estimating, hahn2020bayesian}.
As reported by \citet{https://doi.org/10.48550/arxiv.1707.02641}, 
these methods empirically work well on many synthetic benchmark datasets.
However, due to the complexity of tree-based models,
their performance is not theoretically guaranteed,
demonstrating that 
it is uncertain whether they can be used for the crucial applications 
that involve critical decision-making.
By contrast, using the Gaussian process priors,
we have derived the asymptotic consistency of posterior distribution,
thus yielding more reliable CATE estimation results.
Furthermore, as illustrated in \eqref{posterior}, 
we can analytically compute the posterior 
when given the hyperparameter values.
Thus, our method requires a much smaller computation time,
compared with the tree-based methods,
which rely on the computationally demanding approximate Bayesian inference.

\subsection{Non-Bayesian Approaches with Partially Linear Model}
\label{subsec:related-NonBayesian}

As described in Section \ref{subsec:PLM}, 
the partially linear model in \eqref{proposed_model}
has been studied in the DML-based methods \citep{chernozhukov2018double},
which are founded on the Frequentist approach. 
Closest to our work is 
the R Learner \citep{nie2021quasi}, 
which also makes the assumption that function $\theta$ in \eqref{proposed_model} 
is an element of RKHS 
(i.e., Assumption (\textbf{T}) in Section \ref{sec:theory}).
For this reason, our method can be regarded as 
the Bayesian counterpart of the R Learner.

A large difference between the DML-based methods and ours is
how to remove the confounding bias
arising from the confounders in features $\bm{X}$. 
To achieve this, 
the DML-based methods use conditional distribution $p(t|\bm{x})$ 
(a.k.a., propensity score)
\footnote{For continuous-valued treatment $t \in \mathbb{R}$,
it is called a \textit{generalized propensity score} \citep{imbens2000role}.},
which is expressed with the following model:
\begin{align}
    T=\rho(\bm{X})+\eta,\quad \mathbb{E}[\eta|\bm{X}]=0. \label{p_t_x}
\end{align}
They estimate function $\rho$ in \eqref{p_t_x}
and use the estimated function to eliminate 
the estimation bias due to nuisance parameter $f$ in the partially linear model \eqref{proposed_model}.
By contrast, with our Bayesian approach, 
we do not need to estimate the propensity score from the data
because we can remove the confounding bias, 
simply by marginalizing out function $f$.
As explained by \citet{li2023bayesian},
the main reason is that it is common in Bayesian inference 
to model the priors such that their parameters are mutually independent.
This independence relation implies that
the parameters of propensity score $\rho$ are conditionally independent of functions
$(\theta, f)$ conditioned on sample $\mathcal{D}_n=(\bm{t}_n, \bm{X}_n, \bm{y}_n)$.
Thus, the propensity score model does not affect the inference,
and we do not need to estimate the propensity score.

A serious disadvantage of the DML-based methods is that 
their uncertainty estimate is founded on a confidence interval, which has no theoretical guarantee in a small sample size setting
\citep{van2000asymptotic}.
As claimed in Section \ref{sec:intro}, 
this disadvantage is fatal
if we focus on the applications 
related to decision-making under uncertainty. 
To overcome this disadvantage, 
we have established a Bayesian inference framework 
that can effectively quantify the CATE estimation uncertainty
in the finite sample size regime.

\section{Experiments}
\label{sec:exp}

\begin{figure*}[t]
    \centering
    \includegraphics[width=\linewidth]{./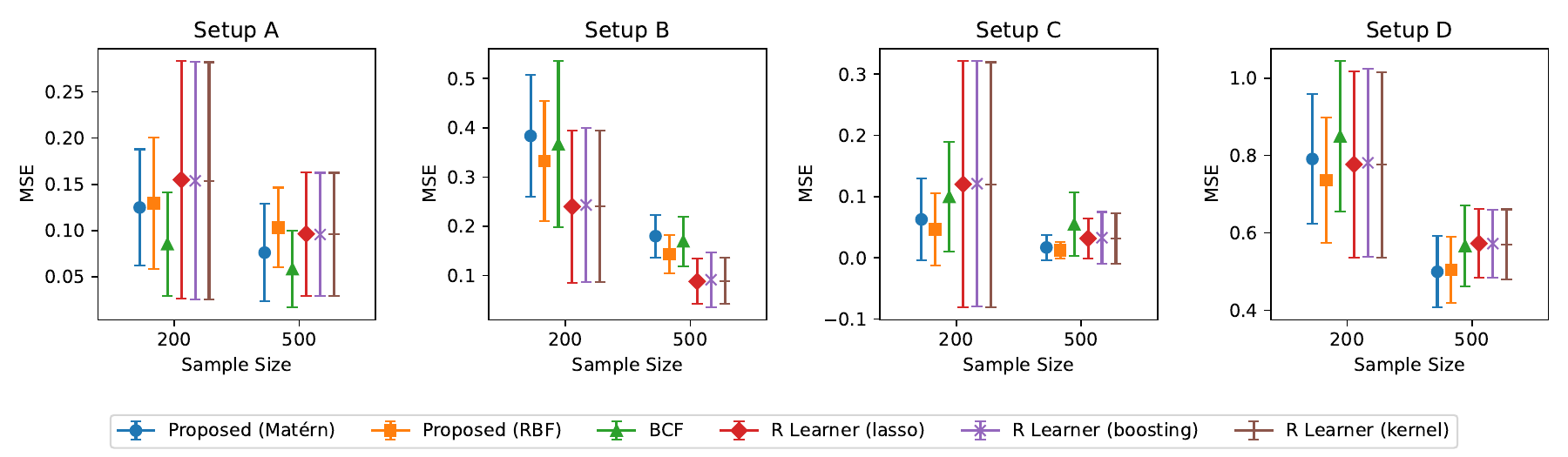}
    \includegraphics[width=\linewidth]{./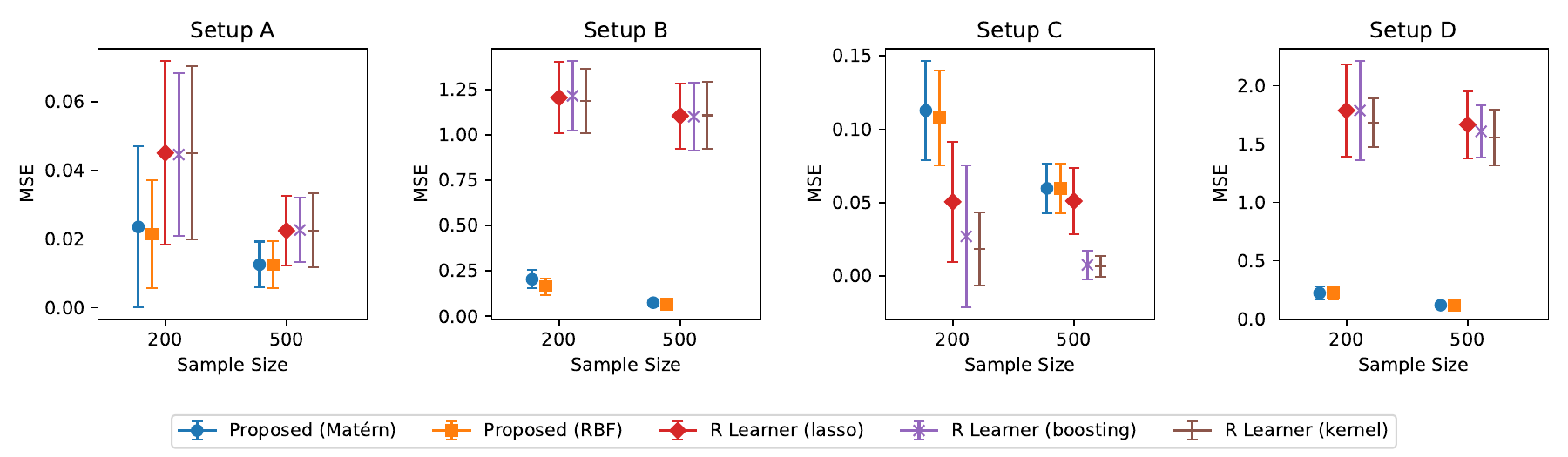}
    \caption{MSEs on synthetic datasets. (Top): binary treatment setup; (Bottom): continuous-values treatment setup. Lower is better.}
    \label{fig:result_synthetic}
\end{figure*}
\begin{figure*}[t]
    \centering
    \includegraphics[width=\linewidth]{./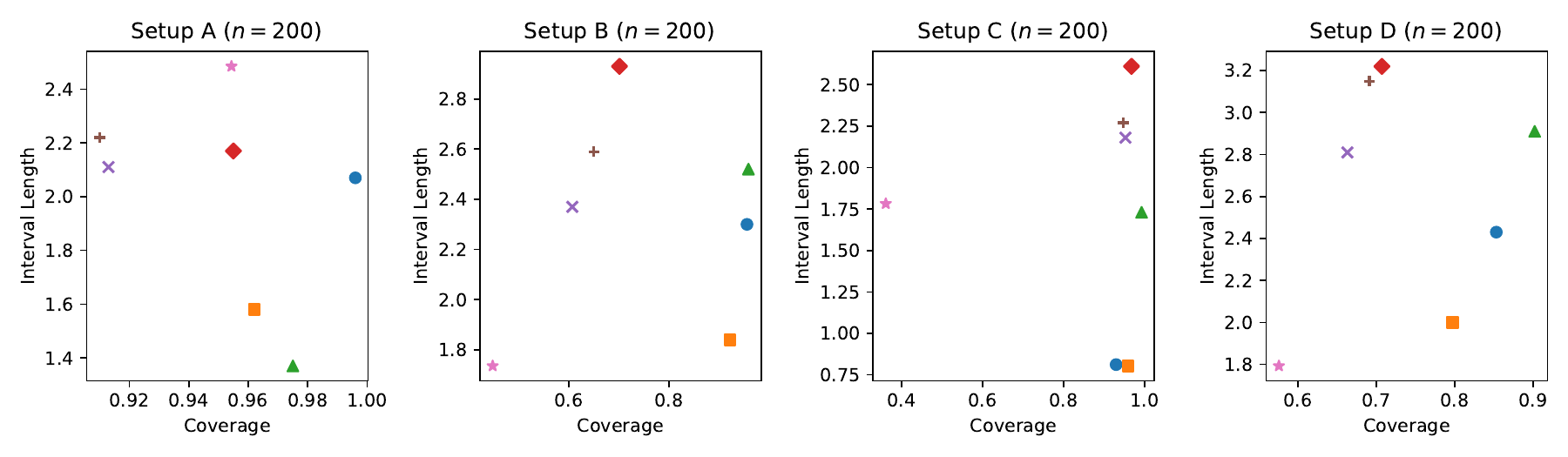}
    \includegraphics[width=\linewidth]{./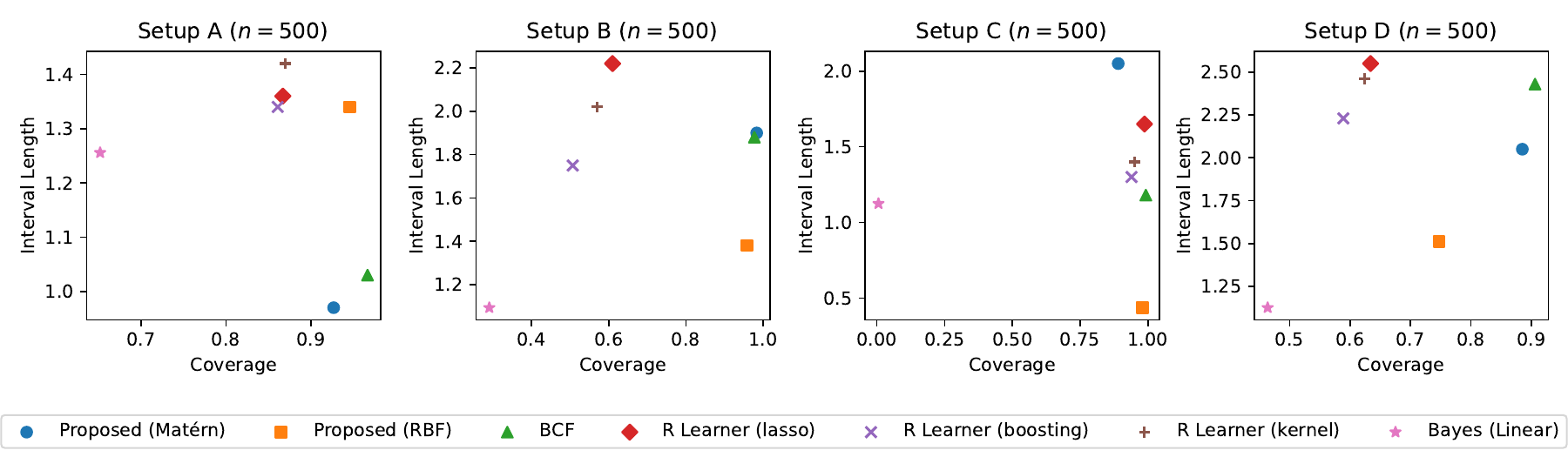}
    \caption{Coverage ratio and interval length of credible/confidence intervals on synthetic data under binary treatment setup. (Top): the sample size $n=200$; (Bottom): the sample size $n=500$. Methods closer to the bottom right corner are better.}
    \label{fig:result_synthetic_binary_interval}
\end{figure*}

Since we have no access to the true CATE values with real-world data,
we evaluated the performance of our method with synthetic and semi-synthetic data.\\
\textbf{Synthetic data}: As with \citet{nie2021quasi}, 
we prepared synthetic data that follow the partially linear model \eqref{proposed_model}.

 We generated four datasets with binary treatment 
using Setup A, B, C, and D in \citet{nie2021quasi}.
In all setups, the number of features $\bm{X}$ is $d = 6$.
Each setup provides different formulations of distributions $p(\bm{x})$ and $p(t|\bm{x})$
and functions $\theta(\bm{X})$ and $f(\bm{X})$ in \eqref{proposed_model}.
The main difference lies in functions $\theta(\bm{X})$ and $f(\bm{X})$:
smooth $\theta$ and $f$ (Setup A);
smooth $\theta$ and non-differentiable $f$ (Setup B);
constant $\theta$ and smooth $f$ (Setup C);
and non-differentiable $\theta$ and $f$ (Setup D).
We detail these setups in Appendix D.1.

As regards continuous-valued treatment setup, 
we modified the above four setups in \citet{nie2021quasi}
to generate the values of treatment $T \in \mathbb{R}$.
In particular, we considered its data-generating process $T = \rho(\bm{X}) + \eta$ (i.e., \eqref{p_t_x})
and formulated function $\rho$ in a different way:
the linear function (Setup A), the constant function (Setup B), 
and the nonlinear and non-differentiable function (Setups C and D).\\
\textbf{Semi-synthetic data}: For binary treatment setup, 
we used the Atlantic Causal Inference Conference (ACIC) dataset 
\citep{shimoni2018benchmarking}, including $1000$ observations
($514$ of whom are treated and $486$ are untreated).
The data of $d=177$ features come from the Linked Birth and Infant Death Data (LBIDD) \citep{macdorman1998infant},
while those of treatment and outcome are simulated. 

Unfortunately, there is no well-established benchmark dataset for the continuous-valued treatment setup,
unlike the binary treatment setup. 
For this reason, 
we focus only on synthetic data experiments for continuous-valued cases. \\
\textbf{Baselines}: We compared our method with the three baselines: the Bayesian causal forest (BCF) method \citep{hahn2020bayesian},\footnote{\url{https://github.com/jaredsmurray/bcf}} the R Learner \citep{nie2021quasi},\footnote{
    We used the original implementation in \url{https://github.com/xnie/rlearner} for the binary treatment setup.
    As regards continuous-valued treatment setup,
since there is no original implementation,
we employed the KernelDML class in the EconML package downloaded from \url{https://econml.azurewebsites.net/index.html}.
} which employs a kernel function to infer function $\theta$ in the partially linear model \eqref{proposed_model}, and the Bayesian linear regression model.
In Appendix F, we present the comparison with the additional baseline, the stableCFR method \cite{wu2023stable}, which is a recent neural-network-based method.

As regards the continuous-valued treatment setup,
we compared our method only with the R Learner
because the original implementation of BCF is unavailable.

To examine the advantages of using a partially linear model, we also compared with the Bayes optimal estimator when employing a linear model defined as $Y=(\bm{\beta}_{\theta}^{\top}\bm{X})T+\bm{\beta}_{f}^{\top}\bm{X}+\varepsilon$.
In this case, the Bayes optimal estimator of the CATE is given by the posterior mean of $\bm{\beta}_{\theta}^{\top}\bm{X}$.\\
\textbf{Evaluation measures}: 
To measure the CATE estimation performance, 
we conducted $100$ experiments and
computed the average and standard deviation of 
the mean squared error (MSE): 
$\frac{1}{m} \sum_{i=1}^m (\hat{\theta}(\tilde{\bm{x}}_i) - \theta(\tilde{\bm{x}}_i))^2$,
where $\hat{\theta}(\cdot)$ denotes the CATE estimated with training data,
and $\tilde{\bm{x}}_1, \dots, \tilde{\bm{x}}_m \overset{i.i.d.}{\sim} p(\bm{x})$ are the feature values in test data.
In all experiments, we used $n=200$ or $n=500$ observations as training data
and $m=100$ observations as test data.
With the ACIC dataset, 
the true CATE value, $\theta(\bm{x}_i)$, is given by a difference between the simulated potential outcomes,
and we evaluated the MSE by randomly selecting training and test data from $1000$ observations.
Note that under the continuous-valued treatment setup, CATE $\theta(\bm{x})$ in the MSE corresponds to $(t'-t) \theta(\bm{x})$ in \eqref{eq-CATEs} when $t' = t + 1$.

To examine the performance of uncertainty estimation,
we computed the 95\% credible interval for the Bayesian methods 
(i.e., the proposed method and BCF)
and the 95\% confidence interval for the Frequentist-based method 
(i.e., the R Learner).
To measure the quality of these intervals,
we computed the coverage ratio 
(i.e., the ratio in which the true CATE value is included)
and the length of the interval.\\
\textbf{Results: }
Figure \ref{fig:result_synthetic} presents 
the MSEs of each method on synthetic datasets.
Here, we omitted the results with the Bayesian linear regression model due to its extremely large MSE values compared with other methods.
See Appendix \ref{sec:rebuttal_experiment} for the results.
 
Our method achieved better or more competitive performance in binary and continuous-valued treatment setups than the BCF and the R Learner.
Our method worked well
when the data were generated from the partially linear model with non-differentiable functions (i.e., Setups B and D).
In particular,
our method outperformed the other two baselines under Setup D. 
These results demonstrate that 
even if function $\theta(\cdot)$, which represents the CATE, is non-differentiable,
our method can approximate it with the smooth functions 
induced by the kernel function.
One of the reasons why our method can perform such effective inference 
is that it does not rely on any approximate posterior computation,
unlike the BCF method.
The proposed method performs worse than other methods in Setup C of the continuous-values treatment setup.
Our method failed to approximate the CATE function $\theta$, which was given as a constant function.
This is because it can be difficult to approximate such a too-simple function with smooth functions in a small sample size setting.

With the ACIC dataset, we obtained similar results 
(Figure \ref{fig:result_LBIDD}).
We observed that 
the performance of the R Learner strongly depended on 
the choice of regression models for estimating the conditional expectations, 
$\mathbb{E}[T|\bm{X}=\bm{x}]$ and $\mathbb{E}[Y|\bm{X}=\bm{x}]$.
When using kernel regression (i.e., R Learner (kernel); omitted in Figure \ref{fig:result_LBIDD}), 
the MSEs were extremely large:
$71.8 \pm 11.0$ ($n=200$) and $52.3\pm 4.3$ ($n=500$).
Such unstable performance may raise serious doubt for practitioners
because appropriately selecting the regression model requires 
deep understanding of the data analysis.
By contrast, our method worked well,
regardless of the choice of kernel functions (i.e., Mat\'ern and RBF kernels).

Figure \ref{fig:result_synthetic_binary_interval} shows the performance of uncertainty quantification on synthetic data.
A higher coverage ratio and a shorter interval length are better: the methods closer to the bottom right corner exhibit better performance.
Especially when $n$ is small, 
our method and BCF achieved larger coverage ratios than the R Learner 
while keeping the length small,
thus demonstrating the effectiveness of the Bayesian approaches.

\begin{figure}[t]
    \centering
    \includegraphics[width=\linewidth]{./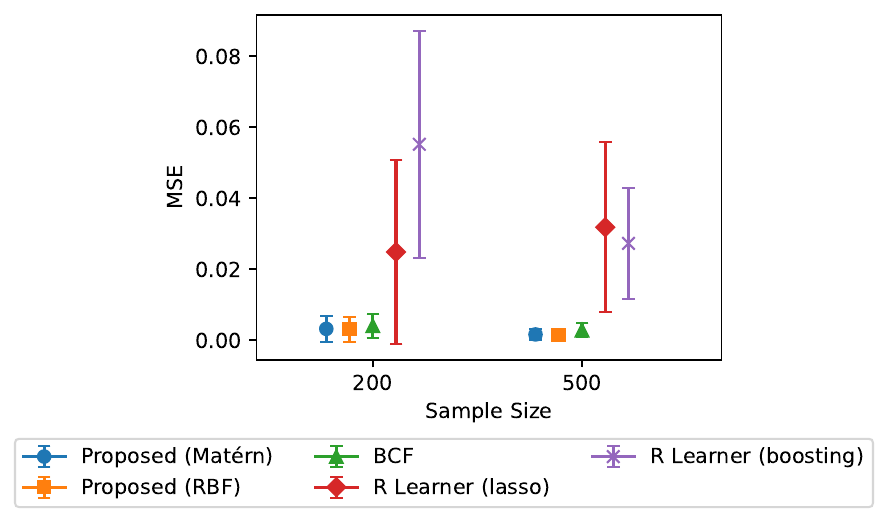}
    \caption{MSEs on ACIC dataset. MSEs of R Learner (kernel) are omitted due to their extremely large values.}
    \label{fig:result_LBIDD}
\end{figure}

\section{Conclusion}

We proposed a Bayesian framework that quantifies the CATE estimation uncertainty with the posterior distribution.
The key idea is to put Gaussian process priors on the nonparametric components in a partially linear model.
This idea offers a computationally efficient Bayesian inference 
with a closed-form posterior of the CATE.
Moreover, it enables us to incorporate prior knowledge about the CATE,
leading to an effective posterior inference, as empirically demonstrated in our experimental results (Appendix E).

Theoretically, we prove that the posterior has asymptotic consistency under some mild conditions.
Our future work constitutes further investigation.
In particular, we will investigate the relationship 
between the minimax information rate and the assumed class of nonlinear functions,
as with the results of the Gaussian-process-based model \citep{alaa2018bayesian}.

\newpage

\section*{Acknowledgments}

This research is partially supported by the Telecommunications Advancement Foundation, and No. 22K12156 of Grant-in-Aid for Scientific Research Category (C), Japan Society for the Promotion of Science.

\bibliography{ref}

\newpage
\ 
\newpage

\appendix
\renewcommand{\thetable}{A.\arabic{table}}
\renewcommand{\theequation}{A.\arabic{equation}}

\section{Conditional Derivative Effect and Partially Linear Model}

In Section 2.2, we show that 
the CATE is expressed using function $\theta$ in the partially linear model: 
\begin{align*}
    Y=\theta(\bm{X}) T+f(\bm{X})+\varepsilon, 
\end{align*}
This section illustrates that 
other treatment effect measure called a \textit{conditional derivative effect} (CDE)
is also formulated using the function $\theta$.

As with the CATE, a CDE is an average treatment effect 
across individuals with features $\bm{X}=\bm{x}$ \citep{hines2021parameterising}.
Unlike the CATE, however, it focuses only on continuous-valued treatment $t \in \mathbb{R}$
and measures an average treatment effect as an average derivative 
(a.k.a., \textit{average derivative effect} \citep{hardle1989investigating}):
\begin{align}
    \mathrm{CDE}(\bm{x})= \mathrm{lim}_{\xi \rightarrow 0} \frac{1}{\xi} \mathbb{E}[Y^{(t+\xi)} - Y^{(t)} | \bm{X}=\bm{x}]. \label{CDE}
\end{align}

Under our partially linear model formulation, 
since the CDE corresponds to the derivative with respect to treatment $T$,
it is formulated as
\begin{align}
    \mathrm{CDE}(\bm{x})= \theta(\bm{x}). \label{CDE-PLM}
\end{align}

Hence, under the continuous-valued treatment setup,
the posterior inference of function $\theta$ in a partially linear model also allows us to estimate the CDE.

\section{Assumptions for Theorem 1}

To state our assumptions, we first introduce the notations about the prior distributions. Let the covariance functions in our Gaussian process priors on  $\theta$ and $f$ be
\begin{align*}
    C(\bm{x}, \bm{x}'; \tau_{\theta}, \lambda_{\theta})&=\tau_{\theta}^{-1}k_{0}(\lambda_{\theta}\bm{x}, \lambda_{\theta}\bm{x}'),\\
    C(\bm{x}, \bm{x}'; \tau_{f}, \lambda_{f})&=\tau_{f}^{-1}k_{0}(\lambda_{f}\bm{x}, \lambda_{f}\bm{x}'),
\end{align*}
where $k_{0}(\cdot,\cdot)$ is a nonsingular covariance kernel, and $\tau_{\theta}$,$\tau_{f}$, $\lambda_{\theta}, \lambda_{f} \geq 0$ are hyperparameters. Let the priors on these hyper-parameters be $\Pi_{\tau_{\theta}}, \Pi_{\tau_{f}}, \Pi_{\lambda_{\theta}}$ and $\Pi_{\lambda_{f}}$, respectively.
\footnote{As noted in Section 3.3,
in our experiments, we did not put the priors on these hyperparameters
 due to the computational complexity. 
 Thus, there is a gap between the theoretical analysis and the experimental results. Extending the theoretical results is left as our future work;
 recent results on the analysis of the empirical Bayes approach 
 in nonparametric regression might be helpful \citep{szabo2015frequentist}.}
For sample size $n$ and for some constant $c$, let $\lambda_{\theta,n}, \lambda_{f,n}, \tau_{\theta,n}$, and $\tau_{f,n}$ be the sequences that satisfy 
\begin{align*}
    &\Pi_{\tau_{\theta}}(\tau_{\theta}<\tau_{\theta,n})=e^{-cn}; \quad \Pi_{\lambda_{\theta}}(\lambda_{\theta}>\lambda_{\theta,n})=e^{-cn}\\
    &\Pi_{\tau_{f}}(\tau_{f}<\tau_{f,n})=e^{-cn}; \quad \Pi_{\lambda_{f}}(\lambda_{f}>\lambda_{f,n})=e^{-cn}.
\end{align*}

Next, we introduce the notations about joint density $p_{\theta,f}(\bm{x}, t, y)$.
We define the subset of joint densities as
\begin{align*}
    \mathcal{P}_{n, \alpha}&=\left\{p_{\theta, f}: \theta, f\in\mathcal{G}_{n,\alpha}\right\},
\end{align*}
where $\mathcal{G}_{n,\alpha}$ denotes the set of smooth functions: 
\begin{align*}
    \mathcal{G}_{n,\alpha}&=\left\{g:\|D^{w}g\|_{\infty}<M_{n}, w\le \alpha \right\},
\end{align*}
where $D^{w}g = (\partial^{w}/\partial^{w_{1}}\ldots\partial^{w_{d}})g(x_{1},\ldots,x_{d})$ is a partial derivative defined 
with positive integers $w_1, \dots, w_d$ and their sum $w=\sum_{i=1}^d w_{i}$,
$\alpha$ is a positive integer, and $M_{n}$ is a sequence of real numbers.

Based on the above notations, we make the assumptions:
\begin{description}
    \item[(P)] (\textbf{Smoothness of priors})
    For every fixed $\bm{x}\in\mathbb{R}^d$, covariance kernel $k_{0}(\bm{x},\cdot)$ has continuous partial derivatives up to order $2\alpha+2$, where $\alpha$ is a positive integer which satisfies the condition described in Assumption (\textbf{E}). Priors $\Pi_{\lambda_{\theta}}$ and $\Pi_{\lambda_{f}}$ are fully supported on $(0, \infty)$.
    \item[(F)] (\textbf{Bounded feature space})
    Feature vector values $\bm{x}$ belong to a bounded subset of $\mathbb{R}^{d}$.
    \item[(T)] (\textbf{True functions})
    True functions $\theta_{0}$ and $f_{0}$ belong to the reproducing kernel Hilbert space (RKHS) of $k_{0}$.
    \item[(E)] (\textbf{Exponential decay of priors})
    For every $b_{1}>0$ and $b_{2}>0$, there exist sequences $M_{n}, \lambda_{\theta,n}, \lambda_{f,n}, \tau_{\theta,n}$ and $\tau_{f,n}$ that satisfy
    \begin{align*}
        M_{n}^{2}\tau_{\theta,n}\lambda_{\theta,n}^{-2}\ge b_{1}n,\quad
        M_{n}^{d\alpha}\le b_{2}n,\\
        M_{n}^{2}\tau_{f,n}\lambda_{f,n}^{-2}\ge b_{1}n,\quad
        M_{n}^{d\alpha}\le b_{2}n.
    \end{align*}
\end{description}
These assumptions correspond to the modified ones of the existing results 
\citep{ghosal1999posterior}, 
which proves the posterior consistency 
for the standard Gaussian process regression problems.

In addition, as described in Section 4,
we make an additional assumption on the boundedness of 
the conditional moments of treatment $T$ given features $\bm{X}$:
\begin{description}
    \item[(B)] (\textbf{Boundedness of conditional moments})
    There exist constants $C_{1}>0$ and $C_{2}>0$ such that
    \begin{align*}
        \mathbb{E}\left[T|\bm{X}\right]<C_{1}\quad \mathbb{P}_{\bm{X}}\textrm{-}a.s.;\quad
        \mathbb{E}\left[T^{2}|\bm{X}\right]<C_{2}\quad \mathbb{P}_{\bm{X}}\textrm{-}a.s.
    \end{align*}
\end{description}
This assumption imposes the conditional mean and variance of $T$ given $\bm{X}$ to be at most $C_1$ and $C_2$, respectively.

\section{Proof of Theorem 1}
From Theorem 2 in \citet{ghosal1999posterior}, it suffices to verify the following conditions hold:
\begin{itemize}
    \item $\Pi\left((\theta, f):\mbox{KL}(p_{\theta_{0},f_{0}}||p_{\theta,f})<\epsilon\right)>0$ for every $\epsilon>0$, where $\mbox{KL}$ is the Kullback-Leibler (KL) divergence.
    \item There exists $\beta>0$ such that $\log N(\epsilon, \mathcal{P}_{n,\alpha}, ||\cdot||_{1})<n\beta$, where $N(\epsilon, \mathcal{P}_{n,\alpha}, ||\cdot||_{1})$ is the covering number (and its logarithm is the metric entropy).
    \item $\Pi(\mathcal{P}^{c}_{n,\alpha})$ is exponentially small.
\end{itemize}

We first show that $\Pi\left((\theta, f):\mbox{KL}(p_{\theta_{0},f_{0}}||p_{\theta,f})<\epsilon\right)>0$ for every $\epsilon>0$.
From the definition of KL divergence,
\begin{align}
    &\mathrm{KL}(p_{\theta_{0},f_{0}}||p_{\theta,f})\nonumber \\
    =&\mathbb{E}_{\bm{X},T,Y}\left[\log \frac{p_{\theta_{0},f_{0}}(\bm{X}, T, Y)}{p_{\theta,f}(\bm{X}, T, Y)}\right]\nonumber \\
    =&\mathbb{E}_{\bm{X},T}\left[\mathbb{E}_{Y}\left[\log \frac{p(Y|\bm{X},T, \theta_{0},f_{0})}{p(Y|\bm{X},T,\theta,f)}|\bm{X},T\right]\right],\label{sum_KL}
\end{align}
where the expectations are taken with respect to distribution $p_{\theta_{0},f_{0}}$.
When $Y=\theta_{0}(\bm{X})T+f_{0}(\bm{X})+\varepsilon$ and $\mathbb{E}\left[\varepsilon\right]=0$, $\mathbb{E}_{Y}\left[Y|\bm{X},T\right]=\theta_{0}(\bm{X})T+f_{0}(\bm{X})$.
Substituting this equation into (\ref{sum_KL}), some algebra leads to
\begin{align}
    &\mathrm{KL}(p_{\theta_{0},f_{0}}||p_{\theta,f})\nonumber \\
    =&\mathbb{E}_{\bm{X},T}\Bigl[\frac{s_{\varepsilon}}{2}(\theta_{0}\left((\bm{X})T+f_{0}(\bm{X}))-\right.\nonumber\\
    &\hspace{3cm} \left.(\theta(\bm{X})T+f(\bm{X}))\right)^{2}|\bm{X},T\Bigr]\nonumber \\
    =&\mathbb{E}_{\bm{X},T}\Bigl[\frac{s_{\varepsilon}}{2}\left((\theta_{0}(\bm{X})-\theta(\bm{X}))^{2}T^{2}+\right.\nonumber\\
    &\qquad \qquad  2(\theta_{0}(\bm{X})-\theta(\bm{X}))(f_{0}(\bm{X})-f(\bm{X}))T+\nonumber\\
    &\left.\hspace{3.8cm}(f_{0}(\bm{X})-f(\bm{X}))^{2}\right)|\bm{X},T\Bigr]\nonumber\\  &=\mathbb{E}_{\bm{X}}\Biggl[\mathbb{E}_{T}\Bigl[\frac{s_{\varepsilon}}{2}\left((\theta_{0}(\bm{X})-\theta(\bm{X}))^{2}T^{2}+\nonumber\right.\\
    &\hspace{1.5cm}2(\theta_{0}(\bm{X})-\theta(\bm{X}))(f_{0}(\bm{X})-f(\bm{X}))T+\nonumber\\
    &\left.\hspace{4cm}(f_{0}(\bm{X})-f(\bm{X}))^{2}\right)|\bm{X}\Bigr]\Biggr]\nonumber\\
    \le &\mathbb{E}_{\bm{X}}\Biggl[\frac{s_{\varepsilon}}{2}\left(C_{2}(\theta_{0}(\bm{X})-\theta(\bm{X}))^{2}+\nonumber \right.\\
    &\hspace{1cm}2C_{1}(\theta_{0}(\bm{X})-\theta(\bm{X}))(f_{0}(\bm{X})-f(\bm{X}))+\nonumber\\
    &\left.\hspace{4cm}(f_{0}(\bm{X})-f(\bm{X}))^{2}\right)|\bm{X}\Biggr]\nonumber\\
    \le &\frac{s_{\varepsilon}}{2}\left(C_{2}\|\theta_{0}-\theta\|_{\infty}^{2}+\right.\nonumber\\
    &\hspace{1cm}\left.2C_{1}\|\theta_{0}-\theta\|\|f_{0}-f\|+\|f_{0}-f\|_{\infty}^{2}\right),\label{KL_bound}
\end{align}
where the first inequality follows from the assumption that $\mathbb{E}[T|\bm{X},T]<C_{1}, \mathbb{E}[T^{2}|\bm{X},T]<C_{2},a.s.$ and the second inequality follows from Cauchy-Schwarz inequality.
From Theorem 4 in \citep{ghosal1999posterior}, it holds $\Pi(\theta: ||\theta_{0}-\theta||_{\infty}<\epsilon)>0$ and $\Pi(f:||f_{0}-f||_{\infty}<\epsilon)>0$, thus $\Pi((\theta,f):\mbox{KL}(p_{\theta_{0},f_{0}}||p_{\theta,f})<\epsilon)>0$.

Next, we bound the covering number $N(\epsilon, \mathcal{P}_{n, \alpha}, ||\cdot||_{1})$.
To simplify the description, we write $N(\epsilon, \mathcal{G}_{n}, ||\cdot||_{\infty})$ as $N_{\epsilon}$.
From the definition of the covering number, we can construct $\theta_{1},\ldots,\theta_{N_{\epsilon}}$, $f_{1},\ldots,f_{N_{\epsilon}}$ that satisfy the following condition:
\begin{itemize}
    \item There exists $i,j\in\left\{1,\ldots,N_{\epsilon}\right\}$ that satisfy $||\theta-\theta_{i}||_{\infty}<\epsilon, ||f-f_{i}||_{\infty}<\epsilon$ for all $\theta, f\in\mathcal{G}_{n}$.
\end{itemize}
We choose $\theta,f\in\mathcal{G}_{n}$ arbitrarily and let $\theta^{*}, f^{*}$ be the functions that satisfy the above condition.
Then it holds
\begin{align}
    &||p_{\theta,f}-p_{\theta^{*},f^{*}}||_{1}\nonumber\\
    \le &\sqrt{2\mbox{KL}(p_{\theta,f}||p_{\theta^{*},f^{*}})}\nonumber\\
    \le &\left(s_{\varepsilon}\left(C_{2}\|\theta-\theta^{*}\|_{\infty}^{2}+\right.\right. \nonumber\\
    &\left. \left.2C_{1}\|\theta-\theta^{*}\|_{\infty}\|f-f^{*}\|_{\infty}+\|f-f^{*}\|_{\infty}^{2}\right)\right)^{1/2}\nonumber\\
    \le &\sqrt{s_{\varepsilon}(2C_{1}+C_{2}+1)}\epsilon, \nonumber
\end{align}
where the first line follows from Pinsker's inequality \citep{Cover2006}, the second line follows from (\ref{KL_bound}), and the last line follows from the definition of $\theta^{*},f^{*}$.
Let $C=2C_{1}+C_{2}+1$.
The above inequality implies $N(\sqrt{Cs_{\epsilon}}\epsilon, \mathcal{P}_{M_{n},\alpha},||\cdot||_{1})\le N_{\epsilon}^{2}$.
From the proof of Theorem 1 in \citep{ghosal1999posterior}, it holds $\log N(\epsilon, \mathcal{G}_{n}, ||\cdot||_{\infty})\le K\epsilon^{-d/\alpha}b^{d/\alpha}n$, so
\begin{align*}
    \log N(\epsilon, \mathcal{P}_{M_{n},\alpha}, ||\cdot||_{1})\le 2K(\epsilon/\sqrt{Cs_{\epsilon}})^{-d/\alpha}b^{d/\alpha}n.
\end{align*}
By letting $b<(\beta/(2K))^{\alpha/d}(\epsilon/\sqrt{Cs_{\epsilon}})$, it holds 
\begin{align*}
\log N(\epsilon, \mathcal{P}_{n}, \left||\cdot|\right|_{1})<n\beta.    
\end{align*}

Finally, it is easy to verify that $\Pi(\mathcal{P}_{M_{n},\alpha}^{c})$ is exponentially small from Lemma 1 in \citep{ghosal1999posterior}.

\section{Experimental Settings}

This section details the settings of the experiments described in Section 6. 
First, we provide the formulations of data generation processes (DGPs) used in synthetic data experiments.
Then we describe the parameter settings of each method.

\subsection{Synthetic Data}

\textbf{Binary Treatment Setup}:
In Section 6.1,
we used the four DGPs (Setup A, B, C, and D) introduced by \citet{nie2021quasi}.
All DGPs generate the $i$-th observation ($i = 1, \dots, n$) as
\begin{align}
    &\bm{X}_{i}\sim p(\bm{x})  \label{Nie-p_x} \\
    &T_{i}|\bm{X}_{i}\sim \mathrm{Bernoulli}(e(\bm{X}_{i})) \label{Nie-p_t_x} \\
    &Y_{i}=\theta(\bm{X}_{i})T_{i} + f(\bm{X}_i) + \varepsilon_{i} \quad (\varepsilon_i \sim \mathcal{N}(0, 1)) \label{Nie-PLM},
\end{align}
where $\bm{X}_i \in \mathbb{R}^d$ ($d = 6$).
\citet{nie2021quasi} formulate function $f$ in the partially linear model 
 as
\begin{align}
    f(\bm{X}_i) = b(\bm{X}_{i})- 0.5 \theta(\bm{X}_{i}) \label{Nie-f}.
\end{align}

In each DGP, $p(\bm{x})$ and functions $e(\bm{X}), \theta(\bm{X})$, and $b(\bm{X})$ 
in \eqref{Nie-p_x}, \eqref{Nie-p_t_x}, \eqref{Nie-PLM}, and \eqref{Nie-f}
are differently formulated. 
Setup A uses smooth functions $\theta$ and $f$:
\begin{align*}
    &p(\bm{x}) = \mathrm{Unif}(0, 1)^{d}, \\
    &e(\bm{X}) = \mathrm{trim}_{0.1}(\sin(\pi X_{1} X_{2})),\\
    &\theta(\bm{X})= \frac{1}{2}(X_{1}+ X_{2}),\\
    &b(\bm{X})=\sin(\pi X_{1}X_{2})+2(X_{3}-0.5)^{2}+X_{4}+0.5X_{5}, 
\end{align*}
where $\mathrm{trim}_{\eta}(x)=\max\left\{\eta, \min(x, 1-\eta)\right\}$.
Setup B employs smooth $\theta$ and non-differentiable $f$, namely
\begin{align*}
    &p(\bm{x}) = \mathcal{N}(\bm{0}, \bm{I}), \\
    &e(\bm{X}) = 0.5,\\
    &\theta(\bm{X})=X_{1}+\log(1+\mathrm{e}^{X_{2}}),\\
    &b(\bm{X})=\max\left\{X_{1}+X_{2}, X_{3}, 0\right\}+\max\left\{X_{4}+X_{5}, 0\right\},
\end{align*}
where function $e(\bm{X})$ is given as a constant function; hence, it simulates a randomized controlled trial.
Setup C is formulated with constant function $\theta$ and smooth $f$ as
\begin{align*}
    &p(\bm{x}) = \mathcal{N}(\bm{0}, \bm{I}), \\
    &e(\bm{X}) = \frac{1}{1+\mathrm{e}^{X_{2}+X_{3}}},\\
    &\theta(\bm{X})=1,\\
    &b(\bm{X})=2\log(1+\mathrm{e}^{X_{1}+X_{2}+X_{3}}),
\end{align*}
indicating that the true CATE value is $1$ regardless of $\bm{X}$'s values. 
Setup D uses non-differentiable $\theta$ and $f$:
\begin{align*}
    &p(\bm{x}) = \mathcal{N}(\bm{0}, \bm{I}), \\
    &e(\bm{X}) = \frac{1}{1+\mathrm{e}^{-X_{1}-X_{2}}},\\
    &\theta(\bm{X})=\max\left\{X_{1}+X_{2}+X_{3}, 0\right\}-\max\left\{X_{4}+X_{5}, 0\right\},\\
    &b(\bm{X})=\frac{1}{2}(\max\left\{X_{1}+X_{2}+X_{3}, 0\right\}+\\
    &\hspace{5cm}\max\left\{X_{4}+X_{5}, 0\right\}).
\end{align*}
\textbf{Continuous-valued Treatment Setup}: In Section 6.2, we prepared synthetic data 
by modifying the above four DGPs.

To produce the $i$-th value of continuous-valued treatment $T \in \mathbb{R}$ for $i = 1, \dots, n$,
we formulated its data-generating process as
\begin{align}
    T_i = \rho(\bm{X}_i) + \eta_i \quad (\eta_i \sim \mathcal{N}(0, 1)), 
\end{align}
where $\rho\colon \mathbb{R}^d \rightarrow \mathbb{R}$ is a function, and $\eta_i$ denotes a standard Gaussian noise.

We modified the four DGPs in \citet{nie2021quasi} 
by employing different formulations of function $\rho$.
In particular, Setup A uses linear function:
\begin{align*}
    \rho(\bm{X}) = X_1 + X_2,
\end{align*}
Setup B employs constant:
\begin{align*}
    \rho(\bm{X}) = 0,
\end{align*}
and Setups C and D use a non-differentiable function:
\begin{align*}
    \rho(\bm{X}) = \max\left\{X_{1}+X_{2}+X_{3}, 0\right\}-\max\left\{X_{4}+X_{5}, 0\right\}.
\end{align*}
Note that the last non-differentiable function is nonlinear. 
To confirm this, consider the two inputs, $\bm{x} = [1, 0, 0, 0, 0, 0]^{\top}$ and $\bm{x}' = [-1, 0, 0, 0, 0, 0]^{\top}$. Then it holds that $\rho(\bm{x}) = 1 - 0 = 1$ and $\rho(\bm{x}') = 0 - 0 = 0$. However, $\rho(\bm{x}+\bm{x}') = 0 - 0 = 0$. Thus the linearity does not hold; hence, it is a nonlinear function.

\subsection{Parameter Settings}

\textbf{Proposed Method}: To formulate the covariance functions in Gaussian process priors in (6) and (7),
we used the Mat\'ern kernel and the RBF kernel.
As described in Section 3.3, the marginal likelihood is not necessarily convex with respect to the hyperparameters of these kernel functions, $\bm{\omega}_{\theta}$ and $\bm{\omega}_{f}$, and the noise precision parameter in the Gaussian likelihood, $s_{\varepsilon}$.
Thus, we maximize the marginal likelihood with respect to these hyperparameters by combining the grid search and the gradient descent method.
The overview of the algorithm is to determine the optimal value of $s_{\varepsilon}$ for each point in the lists $\Omega_{\theta}$ and $\Omega_{f}$, which are the targets for search of $\bm{\omega}_{\theta}$ and $\bm{\omega}_{f}$, using gradient descent. The algorithm then outputs the $\bm{\omega}_{\theta}$ and $\bm{\omega}_{f}$ that maximize the marginal likelihood value for that $s_{\varepsilon}$. For details, refer to Algorithm 1 and Algorithm 2.
The optimization targets only the scale parameter for both the Mat\'ern kernel and the RBF kernel. For both $\Omega_{\theta}$ and $\Omega_{f}$, the range was set to $[10^{-3}, 10^{-2.5}, \ldots, 10^{2.5}, 10^{3}]$.

\textbf{BCF}: The BCF method takes two steps: 
it first estimates the propensity score 
and then takes as input the estimated propensity score values in function $f$.
As a parametric model of the propensity score, we used a logistic regression model.\\
\textbf{R Learner}:
The R Learner requires the propensity score and the conditional outcome models, 
each of which represents conditional expectations
$\mathbb{E}[T|\bm{X}]$ and $\mathbb{E}[Y|\bm{X}]$, respectively.
Following the original paper \citep{nie2021quasi},
we formulated each conditional expectation
by employing lasso regression, boosting regression, and kernel ridge regression.
We tuned the parameters of each regression model using 5-fold cross-validation.

\begin{algorithm}[t]
\caption{Find Optimal Hyperparameters}
\begin{algorithmic}
\Procedure{find\_opt\_hyper}{$\bm{X}_{n}, \bm{t}_{n}, \bm{y}_{n}, \Omega_{\theta}, \Omega_{f}$, \text{lr}}
    \State $\ell^{*} \gets -\infty$
    \State $\bm{\omega}_{\theta}^{*}, \bm{\omega}_{f}^{*} \gets \text{None}, \text{None}$
    \State $s_{\varepsilon}^{*} \gets \text{None}$
    \For{$\bm{\omega}_{\theta}$ in $\Omega_{\theta}$}
        \For{$\bm{\omega}_{f}$ in $\Omega_{f}$}
            \State $s_{\varepsilon}', \ell' \gets \text{GD\_S}(\bm{X}_{n}, \bm{t}_{n}, \bm{y}_{n}, \bm{\omega}_{\theta}, \bm{\omega}_{f}, \text{lr})$
            \If{$\ell' > \ell^{*}$}
                \State $\ell^{*} \gets \ell'$
                \State $\bm{\omega}_{\theta}^{*}, \bm{\omega}_{f}^{*} \gets \bm{\omega}_{\theta}, \bm{\omega}_{f}$
                \State $s_{\varepsilon}^{*} \gets s_{\varepsilon}'$
            \EndIf
        \EndFor
    \EndFor
    \State \Return $s_{\varepsilon}^{*}, \bm{\omega}_{\theta}^{*}, \bm{\omega}_{f}^{*}$
\EndProcedure
\end{algorithmic}
\end{algorithm}

\begin{algorithm}[t]
\caption{Gradient Descent for $s_{\varepsilon}$}
\begin{algorithmic}
\Procedure{GD\_S}{$\bm{X}_{n}, \bm{t}_{n}, \bm{y}_{n}, \bm{\omega}_{\theta}, \bm{\omega}_{f}$, \text{lr}}
    \State $s_{\varepsilon} \gets n\sum_{i=1}^{n}(y_{i}-\bar{y})^{-2}$
    \State $\ell' \gets \infty$
    \State $i \gets 0$
    \While{$i \leq \text{max\_iter}$}
        \State $g \gets \frac{\partial}{\partial s_{\varepsilon}}\log p(\bm{y}_{n}|\bm{t}_{n}, \bm{X}_{n}; \bm{\omega}_{\theta}, \bm{\omega}_{f}, s_{\varepsilon})$
        \State $s_{\varepsilon} \gets s_{\varepsilon} + \text{lr} \times g$
        \State $\ell \gets \log p(\bm{y}_{n}|\bm{t}_{n}, \bm{X}_{n}; \bm{\omega}_{\theta}, \bm{\omega}_{f}, s_{\varepsilon})$
        \If{$\left| \ell - \ell' \right| < \epsilon$}
            \State \textbf{break}
        \EndIf
        \State $\ell' \gets \ell$
        \State $i \gets i + 1$
    \EndWhile
    \State \Return $s_{\varepsilon}, \ell$
\EndProcedure
\end{algorithmic}
\end{algorithm}

\section{Additional Experiments on Prior Knowledge Incorporation}

In this section, we provide experimental results 
that demonstrate the effectiveness of 
utilizing the prior knowledge about treatment effect heterogeneity,
namely,
 the prior knowledge about 
important features for treatment effect heterogeneity 
(i.e., treatment effect modifiers).
To represent this prior knowledge, we employed the covariance function in (9),
which can take into account the importance of each feature
by setting its hyperparameter values.\\
\textbf{Data}: We considered a binary treatment setup and generated the $i$-th observation 
($i=1,\ldots,n$) by
\begin{align}
    &\bm{X}_{i}\sim \mathcal{N}(\bm{0}, \bm{I}), \label{additional_p_x}\\
    &T_{i}|\bm{X}_{i}\sim \mathrm{Bernoulli}(0.5), \label{additional_p_t}\\
    &Y_{i}=\theta(\bm{X}_{i})T_{i} + f(\bm{X}_i) + \varepsilon_{i} \quad (\varepsilon_i \sim \mathcal{N}(0, 1)),\label{additional_p_y}\\
    &f(\bm{X}_{i})=2\log(1+\mathrm{e}^{X_{i1}+X_{i2}+X_{i3}}),\\
    &\theta(\bm{X}_{i})=\sin(X_{i1})
\end{align}
where $\bm{X}_i \in \mathbb{R}^d$ ($d = 6$).
Note that the value of the function $\theta$ depends solely on $X_{i1}$.
\\
\textbf{Settings}: 
Suppose that, as prior knowledge, we know that $X_{1}$ is an important treatment effect modifier, while $X_{2},\ldots, X_{6}$ are not significant treatment effect modifiers. Given this, we consider the following covariance functions and hyperparameter values for the priors $\theta(\cdot) \sim \mathcal{GP}(0, C(\cdot, \cdot; \bm{\omega}_{\theta}))$ and
$f(\cdot)\sim \mathcal{GP}(0, C(\cdot, \cdot; \bm{\omega}_{f}))$.
\begin{align}
    &C(\bm{x}, \bm{x}'; \bm{\omega}_{\theta})=\exp\left\{-\sum_{k=1}^{6}\omega_{\theta,k}(x_{k}-x'_{k})^{2}\right\},\label{anisotropic}\\
    &C(\bm{x}, \bm{x}'; \bm{\omega}_{f})=\exp\left\{-\omega_{f}\|\bm{x}-\bm{x}'\|^{2}\right\}.\label{isotropic}\\
    &\bm{\omega}_{\theta}=[\omega_{\theta,1}, 10^{-10}, 10^{-10}, 10^{-10}, 10^{-10}, 10^{-10}]^{\top},\label{omega_theta}
\end{align}
The hyperparameter values in (\ref{omega_theta}) imply that 
the value of feature $X_{1}$ greatly contributes to the value of $\theta$, 
whereas the values of the other features, $X_{2},\ldots,X_{6}$, 
do not have a large impact on the value of $\theta$. \\
\textbf{Methods}: 
We considered the three Gaussian process-based Bayesian estimators:
\begin{itemize}
    \item Proposed (anisotropic): the proposed method that is based on the above prior knowledge about important features; it uses the covariance functions in (\ref{anisotropic}) and (\ref{isotropic}).
    We determined the values of $\omega_{\theta,1},\omega_{f}$ and $s_{\varepsilon}$ using Algorithm 1 and Algorithm 2.
    \item Proposed (isotropic): the proposed method that does not use any prior knowledge; as with the results in Section 6, it employs the covariance functions in the form of (\ref{isotropic}).
    The hyperparameters were determined using Algorithm 1 and Algorithm 2.
    \item Alaa+: the existing Gaussian-process-based model, which places a Gaussian process prior for each potential outcome \citep{alaa2018bayesian}\footnote{\url{https://github.com/vanderschaarlab/mlforhealthlabpub/tree/main/alg/causal_multitask_gaussian_processes_ite}}
\end{itemize}

As with the synthetic data experiments for binary treatment setup (Section 6.1), 
we constructed CATE estimator \(\hat{\theta}(\cdot)\), 
using training data with the size $n=100$. 
We then evaluated the performance by the MSE 
$\frac{1}{m}\sum_{i=1}^{m}(\hat{\theta}(\tilde{\bm{x}}_{i})-\theta(\tilde{\bm{x}}_{i}))^{2}$
for test data \(\tilde{\bm{x}}_{1},\ldots,\tilde{\bm{x}}_{m} \stackrel{i.i.d.}{\sim} p(\bm{x})\) with size $m=100$.\\
\textbf{Results}:
Table \ref{tab:result_addition} shows the results.
Compared to the proposed methods, the existing method, Alaa+, suffered from large estimation errors.
This result illustrates the weakness of Alaa+:
since it places a Gaussian process on each potential outcome,
it cannot deal with the cases
where the CATE is represented as a much simpler function than potential outcomes; 
in our setup, it is formulated with only a small number of features.
Compared with Alaa+,
the Proposed (anisotropic) achieved better estimation performance.
In particular, 
the MSE of the Proposed (anisotropic) was lower than the Proposed (isotropic),
demonstrating the effectiveness of utilizing the prior knowledge about treatment effect heterogeneity.

\begin{table}[t]
    \centering
    \caption{Simulation results on synthetic data generated according to (\ref{additional_p_x})-(\ref{additional_p_y}). The average and standard deviation of MSEs.}
    \begin{tabular}{c|c}
     Method & MSE \\
    \hline 
    \hline
     Proposed (anisotropic)     & $0.185\pm 0.188$ \\
     Proposed (isotropic) &  $0.240\pm 0.0745$\\
     Alaa+ & $0.244\pm 0.112$\\
     \hline
    \end{tabular}
    \label{tab:result_addition}
\end{table}

\begin{figure*}[t]
    \centering
    \includegraphics[keepaspectratio=true, width=\linewidth]{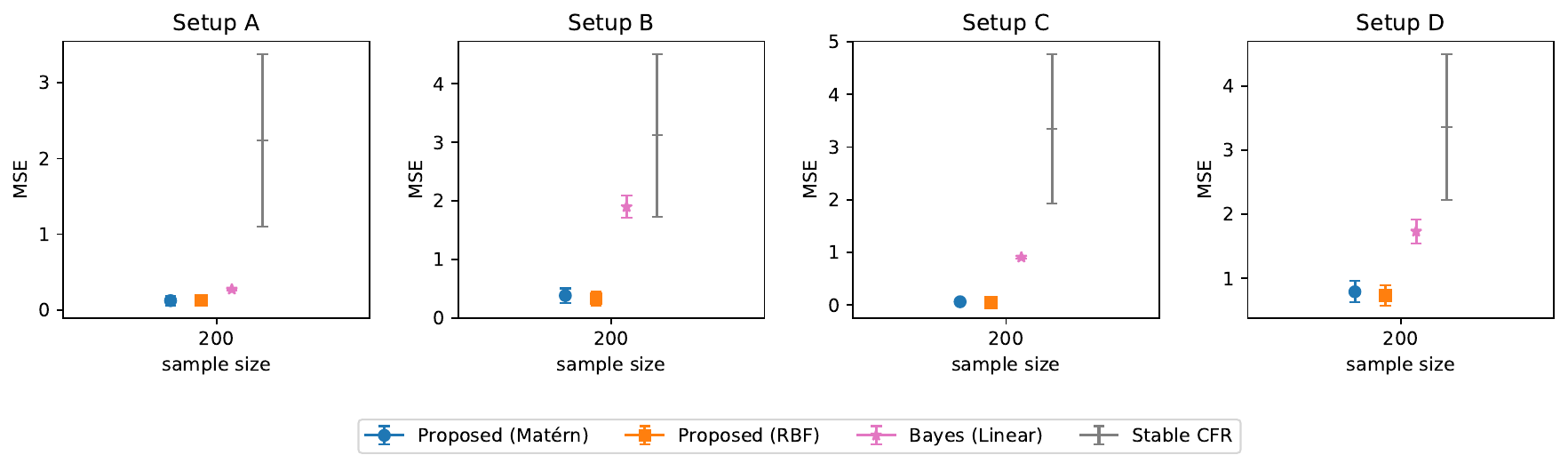}
    \caption{MSEs of the proposed methods and linear model Bayesian estimator and stableCFR (binary treatment setups and $n=200$). Lower is better.}
    \label{fig:AAAI2024_rebuttal_experiment}
\end{figure*}

\section{Additional Performance Comparison}
\label{sec:rebuttal_experiment}
To further investigate the empirical performance of the proposed method, we compared it with the Bayesian linear regression model and the stableCFR \cite{wu2023stable}, which is a recently proposed neural-network-based method.

Figure \ref{fig:AAAI2024_rebuttal_experiment} shows the results on the synthetic datasets used in Section 6.
Our method achieves higher accuracy than the Bayesian linear regression model because it employs nonlinear models to represent functions $\theta$ and $f$.
The stableCFR worked poorly in a small sample size setting due to its model complexity.

\end{document}